\documentclass[a4paper,UKenglish,numberwithinsect]{lipics}
\usepackage{microtype}
\usepackage{wrapfig}

\usepackage{amssymb}
\usepackage{verbatim}

\usepackage{stmaryrd}
\usepackage{latexsym}
\usepackage{amssymb}
\usepackage{amsmath}
\usepackage{array}
\usepackage[all]{xy}
\usepackage{url}

\newcommand{\nat}{\mathsf{nat}}

\newcommand{\setsorts}{\mathcal{B}}

\newcommand{\setvar}{\mathcal{V}}
\newcommand{\F}{\mathcal{F}}
\newcommand{\Rules}{\mathcal{R}}

\newcommand{\FV}{\mathit{FV}}

\newcommand{\atype}{\sigma}
\newcommand{\btype}{\tau}
\newcommand{\ctype}{\rho}
\newcommand{\abasetype}{\iota}
\newcommand{\bbasetype}{\kappa}
\newcommand{\aterm}{s}
\newcommand{\bterm}{t}
\newcommand{\cterm}{u}

\newcommand{\avar}{x}
\newcommand{\bvar}{y}
\newcommand{\cvar}{z}

\newcommand{\afun}{\ensuremath{{f}}}  
\newcommand{\bfun}{\ensuremath{{g}}}

\newcommand{\asub}{\gamma}
\newcommand{\bsub}{\delta}

\newcommand{\clausevar}{\texttt{(var)}}
\newcommand{\clauseapp}{\texttt{(app)}}
\newcommand{\clauseabs}{\texttt{(abs)}}
\newcommand{\clausefun}{\texttt{(fun)}}
\newcommand{\clauserule}{\texttt{(rule)}}

\newcommand{\clausebeta}{($\beta$)}

\newcommand{\abs}[2]{\lambda #1 . \, #2}
\newcommand{\app}[2]{#1 \cdot #2}
\newcommand{\subst}[2]{#1 #2}
\newcommand{\up}[1]{#1^\sharp}

\newcommand{\typepijl}{\!\Rightarrow\!}
\newcommand{\ftypepijl}{\Rightarrow}
\newcommand{\functionpijl}{\Rightarrow}
\newcommand{\decpijl}{\!\Rightarrow\!}
\newcommand{\arrz}{\ensuremath{\rightarrow}}
\newcommand{\arr}[1]{\ensuremath{\rightarrow_{#1}}}

            %veranderd ! via ref3 van rta

\newcommand{\dppijl}{\arrz}
\renewcommand{\c}{\mathtt{c}}

\newcommand{\sorsuccsim}{{\,{}_{_(}\!\!\succsim_{_)}}}

\newcommand{\gterm}{\succ}
\newcommand{\geqterm}{\succsim}
\newcommand{\geqorgterm}{\sorsuccsim}

\newcommand{\geqorg}{{\,{}_{_(}\!\!\geq_{_)}}}

\newcommand{\domain}{\mathsf{dom}}

\newcommand{\basealgebra}{\mathcal{A}}
\newcommand{\basealgebraset}{\mathit{A}}
\newcommand{\gwm}{\sqsupset}
\newcommand{\geqwm}{\sqsupseteq}
\newcommand{\WM}{\mathcal{WM}}
\newcommand{\algint}[1]{\llbracket #1 \rrbracket}
\newcommand{\lamalgint}[1]{[#1]}
\newcommand{\algintc}[1]{\llbracket #1 \rrbracket_{\constvaluation,\varvaluation}}
\newcommand{\constvaluation}{\mathcal{J}}
\newcommand{\varvaluation}{\alpha}
\newcommand{\varvaluationb}{\delta}
\newcommand{\fatlambda}{\lambda\!\!\!\lambda}
\newcommand{\N}{\mathbb{N}}

\newcommand{\Pol}{\mathit{Pol}}

\newcommand{\wanda}{\textsf{WANDA}}
\newcommand{\TTTT}{%
 \textsf{T\kern-0.2em\raisebox{-0.3em}T\kern-0.2emT\kern-0.2em\raisebox{-0.3em}2}%
}
\newcommand{\aprove}{\textsf{AProVE}}
\newcommand{\minisat}{\textsf{MiniSAT}}

\newcommand{\typedepth}{\mathit{order}}

\renewcommand{\o}{\mathsf{o}}
\newcommand{\suc}{\mathsf{s}}
\newcommand{\nul}{\mathsf{0}}

\newcommand{\map}{\mathsf{map}}
\newcommand{\nil}{\mathsf{nil}}
\newcommand{\cons}{\mathsf{cons}}
\newcommand{\natlist}{\mathsf{natlist}}
\newcommand{\funlist}{\mathsf{flist}}

\newcommand{\I}{\mathsf{I}}

\newcommand{\append}{\mathsf{append}}
\newcommand{\reverse}{\mathsf{reverse}}
\newcommand{\hshuffle}{\mathsf{shuffle}}

\newcommand{\collapse}{\mathsf{collapse}}
\newcommand{\diff}{\mathsf{diff}}
\newcommand{\minimum}{\mathsf{min}}
\renewcommand{\gcd}{\mathsf{gcd}}
\newcommand{\build}{\mathsf{build}}

\newcommand{\paragraaf}[1]{\medskip \noindent\textit{\textbf{#1}}}
\newcommand{\sparagraaf}[1]{\noindent \textit{\textbf{#1}}}
\newcommand{\commentaar}[1]{\emph{Comment:} #1}
\newcommand{\note}[1]{\emph{Note: #1}}

\newcommand{\confreport}[2]{#1}    % conference version
\renewcommand{\confreport}[2]{#2} % uncomment to see the report version instead

\title{Polynomial Interpretations for Higher-Order Rewriting
        \footnote{This research is supported by the Netherlands
          Organisation for Scientific Research (NWO-EW) under grant
          612.000.629 (HOT).}
      }

\author[1]{Carsten Fuhs}
\author[2]{Cynthia Kop}
\affil[1]{University College London, Gower Street,
          London WC1E 6BT, UK}
\affil[2]{Vrije Universiteit, De Boelelaan 1081a,
          1081 HV Amsterdam, The Netherlands}

\authorrunning{C.~Fuhs and C.~Kop}
\Copyright[nc-nd]{Carsten Fuhs and Cynthia Kop}
\keywords{higher-order rewriting, termination, polynomial interpretations, weakly monotonic algebras, automation}

%Editor-only macros (do not touch as author)%%%%%%%%%%%%%%%%%%%%%%%%%%%%%%%%%%%
\serieslogo{rta-logo}%please do not change
%\volumeinfo%(easychair interface)
%  {M. Schmidt-Schau{\ss} Editor}% editors
%  {1}% number of editors: 1, 2, ....
%  {submitted to 23rd International Conference on Rewriting Techniques and
%Applications}% event
%  {1}% volume
%  {1}% issue
%  {1}% starting page number
%\EventShortName{RTA'12}
%\DOI{10.4230/LIPIcs.xxx.yyy.p}% to be completed by the volume editor
%%%%%%%%%%%%%%%%%%%%%%%%%%%%%%%%%%%%%%%%%%%%%%%%%%%%%%%%%

\begin{document}

\maketitle

\begin{abstract}
The termination method of weakly monotonic algebras, which has been
defined for higher-order rewriting in the HRS formalism,
offers a lot of power, but has seen little use in recent years.
We adapt and extend this method to the alternative formalism of
\emph{algebraic functional systems}, where the simply-typed
$\lambda$-calculus is combined with algebraic reduction.
Using this theory, we define \emph{higher-order polynomial
interpretations}, and show how the implementation challenges of this
technique can be tackled.
A full implementation is provided in the termination tool \wanda.
\end{abstract}

\section{Introduction} \label{sec:introduction}

One of the most prominent techniques in termination
proofs for first-order term rewriting systems (TRSs)
is the use of \emph{polynomial
  interpretations}.  In this
method, which dates back to the seventies~\cite{lan:79}, terms are
mapped to polynomials
over (e.g.) $\N$.
The method is quite intuitive, since a TRS is usually
written with a meaning
for the function symbols
in mind, which can
often
be modeled by the
interpretation.
In addition, it has been implemented in
various automatic tools, such as \textsf{AProVE}
\cite{gie:sch:thi:06}, \TTTT\ \cite{kor:ste:zan:mid:09} and
\textsf{Jambox} \cite{end:12:1}.
Polynomial interpretations are an instance of the \emph{monotonic
algebra} approach \cite{end:wal:zan:08} which also includes for instance
\emph{matrix interpretations}.
They are used both on their own, and in
combination with \emph{dependency pair}
approaches~\cite{art:gie:00:1}.

In the higher-order world, monotonic algebras were among the first
termination methods to be defined, appearing as early as
1994~\cite{pol:94:1}; an in-depth study is done in van de Pol's 1996
PhD thesis~\cite{pol:96:1}.
Surprisingly,
the method has been almost entirely absent from the literature ever
since.  This is despite a lot of interest in higher-order rewriting,
witnessed not only by a fair number of publications,
but also by the recent participation of higher-order tools in the
\emph{annual Termination Competition} \cite{termcomp}.  Since the
addition of a higher-order category,
two tools have
participated: \textsf{THOR}~\cite{thor}, by
Borralleras and Rubio, and \wanda~\cite{wanda}, by the second author of this
paper.
So far,
neither tool has implemented weakly monotonic algebras.

In this paper we aim to counteract this situation, by both studying
the class of polynomial interpretations in the natural numbers, and
implementing the resulting technique in the termination tool \wanda.
Van de Pol did not consider automation of his method (there was less
focus on automation at the time),
but there are now years of experience of the first-order world to
build on; we will lift the parametric first-order approach
\cite{con:mar:tom:urb:05},
and make some necessary
adaptations
to cater for the presence of higher-order variables.

\paragraaf{Paper Setup} Section~\ref{sec:preliminaries} discusses
preliminaries:
\emph{Algebraic Functional Systems}, the higher-order formalism we
consider, reduction pairs
and
weakly monotonic
algebras for
typed $\lambda$-terms. 
In Section~\ref{sec:monalg} we extend these definitions to AFSs,
and define a general termination method.
Sec-\linebreak tion~\ref{sec:polynomial} defines the class of
\emph{higher-order
polynomials}, and in Section~\ref{sec:automation} we
show how suitable polynomial interpretations can be found automatically.
Experiments with this implementation are presented in
Section~\ref{sec:experiments}, and an overview
and ideas for future work are
given in Section~\ref{sec:conclusion}.

The main contribution of this paper are the techniques for
automation, discussed in Section~\ref{sec:experiments}.  For
simplicity of the code,
these techniques are limited to the (very common) class
of second-order AFSs, although extensions to systems of a higher
order are possible.  As far as we know, this is the first
implementation of higher-order polynomial interpretations.

\confreport{\emph{An extended version of this paper with more elaborate
proofs is available at~\cite{fuh:kop:12:2}.}}{\emph{This is a
pre-editing version of~\cite{fuh:kop:12:1}, including
some proof extensions in the appendix.}}

\section{Background}\label{sec:preliminaries}

\subsection{Algebraic Functional Systems}\label{subsec:afs}

We consider algebraic higher-order rewriting as defined by
Jouannaud and Okada, also called \emph{Algebraic Functional Systems (AFSs)} \cite{jou:oka:91:1}.
This formalism combines the simply-typed $\lambda$-calcu-\linebreak lus with algebraic
reduction, and appears in papers on e.g.\ HORPO~\cite{jou:rub:07:1},
MHOSPO~\cite{bor:rub:01:1} and dependency pairs~\cite{kop:raa:11:1};
it is also the formalism in the higher-order category of the
annual ter-\linebreak mination competition.
We follow roughly the definitions in~\cite[Ch.~11.2.3]{ter:03}, as
recalled below.

\paragraaf{Types and Terms}
The set of \emph{simple types} (or just \emph{types}) 
is generated from a given set $\setsorts$ of \emph{base types} and
the binary, right-associative type constructor $\typepijl$; types
are denoted by $\atype,\btype$ and base types by $\abasetype,
\bbasetype$.
A type with at least one occurrence of $\ftypepijl$
is called a \emph{functional type}.
A \emph{type declaration}
is an expression
of the form
$[\atype_1 \times \ldots \times \atype_n] \decpijl \btype$ for types
$\sigma_i,\tau$;
if $n = 0$
we just write $\btype$.
Type declarations are not types, but are used to
``type'' function symbols.
All types can be expressed in the form $\atype_1 \typepijl \ldots
\typepijl \atype_n \typepijl \abasetype$ (with $n \geq 0$ and
$\abasetype \in \setsorts$).
The \emph{order} of a type is
$\typedepth(\abasetype) = 0$ if $\abasetype \in \setsorts$, and
$\typedepth(\atype \typepijl \btype) =
\max(\typedepth(\atype)+1,\typedepth(\btype))$.
Extending this
to type declarations,
$\typedepth([\atype_1 \times \ldots \times \atype_n] \decpijl \btype) =
\max(\typedepth(\atype_1)+1,\ldots,\typedepth(\atype_n)+1,
\typedepth(\btype))$.

\smallskip
We assume a set $\setvar$
of infinitely many typed
variables for each type, and a set $\F$ disjoint from $\setvar$ which
consists of function symbols, each equipped with a type declaration.
\emph{Terms} over $\F$ are those expressions $\aterm$
for which we can infer $\aterm : \atype$ for some type $\atype$ using
the clauses:

\begin{tabular}{lllll}
  \clausevar &
  & $\avar : \atype$
  & \mbox{} & if $\avar:\atype \in \setvar$ \\
  \clauseapp &
  & $\aterm \cdot \bterm : \btype$
  & \mbox{}& if $\aterm : \atype \typepijl \btype$ and
      $\bterm : \atype$  \\
  \clauseabs & 
  & $\abs{\avar}{\aterm} : \atype \typepijl \btype$
  & \mbox{} & if $\avar:\atype \in \setvar$ and $\aterm : \btype$ \\
  \clausefun &
  & $\afun (\aterm_1 , \ldots , \aterm_n) : \btype$
  & \mbox{} &
  if $\afun : {[\atype_1 \times \ldots \times \atype_n] \typepijl
    \btype} \in \F$ and 
  $\aterm_1 :\atype_1 , \ldots, \aterm_n:\atype_n$
\end{tabular}

\noindent
Note that a function symbol $\afun : {[\atype_1 \times \ldots \times
\atype_n] \typepijl \btype}$ takes exactly $n$ arguments, and
$\btype$ is not necessarily a base type (a type declaration
gives the \emph{arity} of the symbol).
$\lambda$ binds occurrences of variables as in the
$\lambda$-calculus.
Terms are considered modulo $\alpha$-conversion; bound variables are
renamed if necessary.
Variables which are not bound are called \emph{free}, and
the set of free variables of $\aterm$ is denoted
$\FV(\aterm)$.
Application is left-associative, so
$\app{\app{\aterm}{\bterm}}{\cterm}$ should be read \linebreak
$\app{(\app{\aterm}{\bterm})}{\cterm}$.
Terms constructed without clause \clausefun\ are also called
\emph{(simply-typed) $\lambda$-terms}.

A \emph{substitution} $[\vec{\avar}:=\vec{\aterm}]$, 
with $\vec{\avar}$ and $\vec{\aterm}$
finite vectors of equal length, 
is the homomorphic extension of the type-preserving mapping
$\vec{\avar} \mapsto \vec{\aterm}$ from variables to terms.  
Substitutions are denoted $\asub, \bsub$,
and the result of applying $\asub$ to a term $\aterm$ 
is denoted $\subst{\aterm}{\asub}$.
The \emph{domain} $\domain(\gamma)$ of $\gamma =
[\vec{\avar}:= {\vec{\aterm}}]$ is $\{\vec{\avar}\}$.
Substituting does not bind free variables.
A \emph{context} $C[]$ is a term with a single occurrence of a
special symbol
$\Box_\atype$.
The result of replacing $\Box_\atype$ in $C[]$
by a term $\aterm$ of type $\atype$ 
is denoted $C[s]$.
Free variables may be captured;
if $C[] = \abs{\avar}{\Box_\atype}$ then $C[\avar] =
\abs{\avar}{\avar}$.

\pagebreak
\sparagraaf{Rules and Rewriting}
A \emph{rewrite rule} is a pair of terms $l \arrz r$ such that $l$
and $r$ have the same type and all free variables of $r$ also occur
in $l$.  In~\cite{kop:11:1} some termination-preserving
transformations on the general format of AFS-rules are presented;
using these results, we may additionally assume that $l$ has
the form $\afun(l_1,\ldots,l_n) \cdot l_{n+1} \cdots l_m$ (with $\afun
\in \F$ and $m
\geq n \geq 0$), that $l$ has no subterms $\app{\avar}{\aterm}$ with
$\avar$ a free variable, and that neither $l$ nor $r$ have a subterm
$(\abs{\avar}{\aterm}) \cdot \bterm$.
Given a set of rules $\Rules$, the \emph{rewrite} or \emph{reduction
relation} $\arr{\Rules}$ on terms is given by the following clauses:
\begin{tabular}{crcll}
\clauserule & $C[\subst{l}{\asub}]$ & $\arr{\Rules}$ &
  $C[\subst{r}{\asub}]$ & 
  with $l \arrz r \in \Rules$, $C$ a context,
  $\asub$ a substitution \\
\clausebeta & $C[(\abs{\avar}{\aterm}) \cdot \bterm]$ &
  $\arr{\Rules}$ & $C[\aterm[\avar:=\bterm]]$ &
  with $\aterm,\bterm$ terms, $C$ a context \\
\end{tabular} \\
An \emph{algebraic functional system (AFS)} is the combination of a
set of terms and a rewrite relation on this set, and is usually
specified by a pair $(\F,\Rules)$, or just by a set $\Rules$ of rules.  An AFS is
terminating if there is no infinite reduction $\aterm_1 \arr{\Rules}
\aterm_2 \arr{\Rules} \ldots$

An AFS is \emph{second-order} if
the type declarations of all function symbols have order $\leq 2$.
In a second-order system, all free variables in the rules have order
$\leq 1$ (this follows by the restrictions on the left-hand side),
and all bound variables have base type (this holds because
free variables have order $\leq 1$ and
we have assumed that the rules do not contain
$\beta$-redexes).

\begin{example} \label{ex:hshuffledef}
One of the examples considered in this paper is the AFS $\hshuffle$.
This (second-order) system for list manipulation has five
function symbols,
$\nil : \natlist,\ \cons : [\nat \times \natlist] \decpijl \natlist,\ 
\append : [\natlist \times \natlist] \decpijl \natlist,\ \reverse :
[\natlist] \decpijl \natlist,\ \hshuffle : [(\nat \typepijl \nat)
\times \natlist] \decpijl \natlist$, and the following
rules: \vspace{4px} \\
\indent
$
\begin{array}{rclrcl}
\append(\cons(h,t),l) & \arrz & \cons(h,\append(t,l)) &
\append(\nil,l) & \arrz & l \\
\reverse(\cons(h,t)) & \arrz & \append(\reverse(t),\cons(h,\nil)) &
\reverse(\nil) & \arrz & \nil \\
\hshuffle(F,\cons(h,t)) & \arrz & \cons(\app{F}{h},\hshuffle(F,
  \reverse(t)))\ \  &
\hshuffle(F,\nil) & \arrz & \nil \\
\end{array}
$
\end{example}

\subsection{Reduction Pairs}\label{subsec:proveterm}

To prove termination, modern approaches typically use
\emph{reduction pairs}, in one of three setups:

For \emph{rule removal}, we consider a \emph{strong reduction pair}:
a pair $(\geqterm,\gterm)$ of a quasi-ordering and a well-founded
ordering on terms, such that
  $\geqterm$ and $\gterm$ are \emph{compatible}:
  $\geqterm \cdot \gterm$ is included in $\gterm$ or
  $\gterm \cdot \geqterm$ is,
  both $\geqterm$ and $\gterm$ are \emph{monotonic},
  both $\geqterm$ and $\gterm$ are \emph{stable}
  (preserved under substitution),
  and in the higher-order case, \emph{$\geqterm$ contains $\beta$}:
  $\app{(\abs{\avar}{\aterm})}{\bterm} \geqterm \aterm[\avar:=\bterm]$.

If $\Rules = \Rules_1 \uplus \Rules_2$ and $l \gterm r$ for rules in
$\Rules_1$, and $l \geqterm r$ for rules in $\Rules_2$, then there
is no $\arr{\Rules}$-sequence which uses the rules in $\Rules_1$
infinitely often; this would contradict well-foundedness of $\gterm$.
Thus, $\arr{\Rules}$ is terminating
if $\arr{\Rules_2}$ is terminating.  In practice, we try
to orient all rules with either $\gterm$ or $\geqterm$, and then
remove those ordered with $\gterm$ and continue with the rest.

The second setup, \emph{dependency pairs}, is more sophisticated.
In this approach,
dependency pair chains are considered, which use infinitely many
``dependency pair'' steps at the top of a term.
It is
enough to orient the resulting constraints with a \emph{weak
reduction pair}: a pair $(\geqterm,\gterm)$ of a quasi-ordering and a
compatible well-founded ordering where both are stable, and
$\geqterm$ is monotonic and contains $\beta$.
The dependency pair approach was defined for first-order TRSs
in~\cite{art:gie:00:1}, and has seen many extensions and improvements
since.  For higher-order rewriting, two variations
exist: \emph{static dependency
pairs}~\cite{kus:iso:sak:bla:09:1} and
\emph{dynamic dependency
pairs}~\cite{sak:wat:sak:01,kop:raa:12:1}.

The static dependency pair approach is restricted to \emph{plain
function passing} systems; slightly simplified, whenever a
higher-order variable $F$ occurs in the right-hand side of a rule
$\afun(l_1,\ldots,l_n) \arrz r$, then $F$ is one of the $l_i$.  Static
dependency pairs may have variables in the right-hand side which do
not occur in the left (such as a dependency pair
$\up{\I}(\suc(n)) \dppijl \up{\I}(m)$), but always have the form
$\up{\afun}(l_1,\ldots,l_n) \dppijl \up{\bfun}(r_1,\ldots,r_m)$.  The
static approach gives constraints of the form $l \geqterm r$ or $l
\gterm r$ for dependency pairs $l \dppijl r$, and $l \geqterm r$ for
rules $l \arrz r$.

The dynamic dependency pair approach is unrestricted, but
right-hand sides of dependency pairs may be headed by a variable, e.g.
$\up{\collapse}(\cons(F,t)) \dppijl \app{F}{\collapse(t)}$, and
sometimes subterm steps are needed.  Thus, the dynamic
approach not only gives constraints
$l \gterm r$ or $l \geqterm r$ for dependency
pairs
and $l \geqterm r$ for rules,
but also
two further groups of constraints:
\begin{itemize}
\item $\app{\afun(\aterm_1,\ldots,\aterm_n)}{\bterm_1} \cdots
  \bterm_m \geqterm \app{\aterm_i}{\c_{\atype_1}} \cdots
  \c_{\atype_{k_i}}$ if both sides have base type, $\aterm_i : \atype_1
  \typepijl \ldots \typepijl \atype_{k_i} \typepijl \abasetype$ and
  $\afun$ is a symbol in some fixed set $S$ (the $\c_{\atype_j}$ are
  special symbols which may occur in the right-hand sides of
  dependency pairs but do not occur in the rules)
\item $\app{\aterm}{\bterm_1} \cdots \bterm_n \geqterm
  \app{\bterm_i}{\c_{\atype_1}} \cdots \c_{\atype_{k_i}}$ if both sides
  have base type, and $\bterm_i : \atype_1 \typepijl \ldots \typepijl
  \atype_{k_i} \typepijl \abasetype$
\end{itemize}

A third setup, which also uses a sort of reduction pair rather than
the traditional reduction ordering, are the \emph{monotonic semantic
path orderings} from Borralleras and Rubio~\cite{bor:rub:01:1}.  This
method is based on a
recursive path ordering, but uses a
well-founded order on terms rather than a precedence on function
symbols; this
gives constraints of the form $\aterm \succeq_I \bterm,\ \aterm
\succeq_Q \bterm,\ \aterm \succ_Q \bterm$, where $\succeq_I$ and
$\succeq_Q$ are quasi-orderings and
$\aterm \succeq_I \bterm$
implies $\afun(\ldots,\aterm,\ldots) \succeq_Q \afun(\ldots,\bterm,
\ldots)$.

\smallskip
In this paper, we focus on the first two setups, which have
been implemented in
\wanda.  However, the technique could be used with the monotonic
semantic path ordering as well.

\subsection{First-order Monotonic Algebras - Idea Sketch}
\label{subsec:fo}

In the first-order definition of monotonic
algebras~\cite{end:wal:zan:08}, terms are
mapped to elements of a
well-founded target domain $(\basealgebraset,>,\geq)$.
This is done by choosing an interpretation function
$\constvaluation(\afun)$ for all function symbols $\afun$ that is
monotonic w.r.t.\ $>$ and $\geq$,
and extending this homomorphically to an
interpretation $\algint{\cdot}$ of terms;
for
\emph{polynomial interpretations}, $\constvaluation(\afun)$ is always
a polynomial.
If $\algintc{l} > \algintc{r}$ for all valuations $\varvaluation$
of the free variables of $l$, then
$\algint{C[l\asub]}_\constvaluation >
\algint{C[r\asub]}_\constvaluation$ for all
contexts $C$
and
substitutions $\asub$.  Thus,
the pair $(\geqterm,\gterm)$ where $\aterm \geqterm \bterm$ if
$\algintc{\aterm} \geq \algintc{\bterm}$ and $\aterm \gterm \bterm$
if $\algintc{\aterm} > \algintc{\bterm}$
can be used as a strong reduction pair.

For example,
to prove termination of the TRS consisting of the two $\append$
rules from Example~\ref{ex:hshuffledef}, we might assign the
following interpretation to the function symbols:
$\constvaluation(\nil) = 2,\ \constvaluation(\cons) = \fatlambda n m.
n+m+1$ and $\constvaluation(\append) = \fatlambda n m.2 \cdot n + m
+ 1$.
Here, the $\fatlambda$ syntax indicates function creation: $\cons$,
for instance,
is mapped to a function which takes two arguments, and
returns their sum plus one.
Calculating all $\algintc{l},\algintc{r}$,
and noting that $(\N,>,\geq)$ is a
well-founded set and that all interpretations are
monotonic
functions, we see that the TRS is terminating because for all $h,t,l$:
$4\!+\!l\!+\!1 > l$ (for the rule $\append(\nil,l) \arrz l$), and
$2\!\cdot\!h\!+\!2\!\cdot\!t\!+\!2\!+\! l\!+\!1 >
h\!+\!2\!\cdot\!t\!+\!l\!+\!1$ (for $\append(\cons(h,t),l) \arrz
\cons(h,\append(t,l))$.

\subsection{Weakly Monotonic Functionals}\label{subsec:wmf}

In higher-order rewriting we have to deal with infinitely many types
(due to the type constructor $\typepijl$), a complication
not present in first-order rewriting.  As a consequence, it is not
practical to map all terms to
the same target
set.
A more natural interpretation would be, for instance, to
map a
functional term $\abs{\avar}{\aterm} : \o \typepijl \o$ to an element
of the function space $\N \functionpijl \N$.  However, this choice
has problems of its own, since it forces the termination prover to
deal with functions that absolutely nothing is known about.
Instead, the target domain for interpreting terms, as proposed by van
de Pol in~\cite{pol:96:1}, is the class of \emph{weakly monotonic
functionals}.
To each type $\atype$ we assign a set $\WM_\atype$ and
two relations: a well-founded ordering $\gwm_\atype$ and a 
quasi-ordering $\geqwm_\atype$.  Intuitively, the elements of
$\WM_{\atype \ftypepijl \btype}$ are functions which preserve
$\geqwm$.

\begin{definition}[Weakly Monotonic Functionals]
\cite[Def.\ 4.1.1]{pol:96:1}
We assume given a \emph{well-founded set}: a triple $\basealgebra =
(\basealgebraset, >, \geq)$ of a non-empty set, a well-founded
partial ordering on that set and a compatible
quasi-ordering.\footnote{Van de Pol defines $\geq$
as the reflexive closure of $>$.  In contrast, here we generalise the
notion of a well-founded set to include an explicitly given
compatible quasi-ordering $\geq$.
}
To each type $\atype$ we associate a set $\WM_\atype$ of
\emph{weakly monotonic functionals of type $\atype$}
and two relations $\gwm_\atype$ and $\geqwm_\atype$, defined
inductively as follows:

For a base type $\abasetype$, we have $\WM_\abasetype =
\basealgebraset$;\ $\gwm_\abasetype \mathord{=} >$, and
$\geqwm_\abasetype \mathord{=} \geq$.

For a functional type $\atype \typepijl \btype$,
  $\WM_{\atype \ftypepijl \btype}$ consists of the functions $f$ from
  $\WM_\atype$ to $\WM_\btype$
  such that:
  if $x \geqwm_\atype y$ then $f(x) \geqwm_\btype f(y)$.
  Let
  $f \gwm_{\atype \ftypepijl \btype} g$ iff $f(x) \gwm_\btype g(x)$,
  and
$f \geqwm_{\atype \ftypepijl \btype} g$ iff $f(x) \geqwm_\btype
  g(x)$
  for all $x,y \in \WM_\atype$.
\end{definition}

\noindent
Thus, $\WM_{\atype \ftypepijl \btype}$ is a subset of the function
space $\WM_\atype \functionpijl \WM_\btype$, consisting of functions
which preserve
$\geqwm$.
Note that both $\WM_\atype$ and
the relations $\gwm_\atype$ and $\geqwm_\atype$ should be
considered as
parametrised with $\basealgebra$; the complete notation would be
$(\WM_\atype^\basealgebra,\gwm_\atype^\basealgebra,
\geqwm_\atype^\basealgebra)$.
For readability, $\basealgebra$ will normally be omitted, as will the
type denotations for the
various $\gwm_\atype$ and $\geqwm_\atype$ relations.
The phrase ``$f$ is weakly monotonic'' means that $f \in \WM_\atype$
for some $\atype$.

It is not hard to see that an element $\fatlambda \avar_1 \ldots
\avar_n.P(\avar_1,\ldots,\avar_n)$ of the function space
$\WM_{\atype_1} \functionpijl \ldots \functionpijl \WM_{\atype_n}
\functionpijl \basealgebraset$ is weakly monotonic if and only if:
\[
\indent\indent\begin{array}{c}
\forall N_1,M_1 \in \WM_{\atype_1}, \ldots, N_n,M_n \in
\WM_{\atype_n}: \\
\mathrm{if\ each}\ N_i \geqwm M_i\ \mathrm{then}\ 
P(N_1,\ldots,N_n) \geqwm P(M_1,\ldots,M_n)
\end{array}
\]

\noindent
By Lemmas~4.1.3 and~4.1.4 in~\cite{pol:96:1} we obtain several
pleasant properties of $\geqwm$ and $\gwm$:

\begin{lemma}\label{lem:gwminteract}
For all types $\atype$,\ the relations $\gwm_\atype$ and
$\geqwm_\atype$ are compatible, $\gwm_\atype$ is well founded,
$\geqwm_\atype$ is reflexive, and
both $\gwm_\atype$ and $\geqwm_\atype$ are transitive.
\end{lemma}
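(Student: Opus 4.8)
The plan is to argue by induction on the structure of the type $\atype$, since $\WM_\atype$ together with both relations $\gwm_\atype$ and $\geqwm_\atype$ is defined inductively, with the functional case obtained pointwise from the codomain type. For the base case $\atype = \abasetype$ we have $\WM_\abasetype = \basealgebraset$, $\gwm_\abasetype = {>}$ and $\geqwm_\abasetype = {\geq}$, so every claim is immediate from the axioms of the well-founded set $\basealgebra$: $\geq$ is a quasi-ordering and hence reflexive and transitive, $>$ is a well-founded partial ordering and hence transitive and well founded, and the two are compatible by assumption.

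For the inductive step I decompose $\atype$ as a functional type $\btype \typepijl \ctype$ and assume the lemma for the component types, in particular for the codomain $\ctype$. Reflexivity, transitivity and compatibility all transfer directly, because $\gwm$ and $\geqwm$ on a functional type are defined by quantifying the corresponding relation on $\ctype$ over all arguments. For transitivity of $\geqwm_{\btype \typepijl \ctype}$, say, from $f \geqwm g$ and $g \geqwm h$ I obtain $f(x) \geqwm_\ctype g(x)$ and $g(x) \geqwm_\ctype h(x)$ for every $x \in \WM_\btype$, and transitivity of $\geqwm_\ctype$ (the induction hypothesis) yields $f(x) \geqwm_\ctype h(x)$, i.e.\ $f \geqwm h$. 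Reflexivity of $\geqwm$, transitivity of $\gwm$, and compatibility (whichever inclusion $\geqwm \cdot \gwm \subseteq \gwm$ or $\gwm \cdot \geqwm \subseteq \gwm$ holds for $\ctype$) follow by exactly the same pointwise reasoning.

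The one step that needs more care is well-foundedness of $\gwm_{\btype \typepijl \ctype}$. The natural argument is that an infinite descending chain $f_0 \gwm f_1 \gwm \cdots$ would, by evaluating at a fixed argument $a$, yield $f_0(a) \gwm_\ctype f_1(a) \gwm_\ctype \cdots$, contradicting well-foundedness of $\gwm_\ctype$. The hidden assumption is that $\WM_\btype$ is non-empty, so that such a witness $a$ exists; this is the main obstacle, since it cannot simply be taken for granted for an arbitrary functional domain. I would therefore first establish, by a separate auxiliary induction on types, that every $\WM_\atype$ is non-empty: the base case holds because $\basealgebraset$ is non-empty, and for $\btype \typepijl \ctype$, given any $b \in \WM_\ctype$ the constant function $\fatlambda x.\,b$ lies in $\WM_{\btype \typepijl \ctype}$, as $x \geqwm_\btype y$ forces $b \geqwm_\ctype b$ by reflexivity of $\geqwm_\ctype$. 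With non-emptiness in hand, the pointwise contradiction argument goes through and completes the induction.
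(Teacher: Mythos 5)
Your proposal is correct and matches the paper's own proof: both argue by induction on types, handle reflexivity, transitivity and compatibility pointwise, and recognise that well-foundedness of $\gwm_{\btype \typepijl \ctype}$ needs a witness in $\WM_\btype$, supplied by proving non-emptiness of every $\WM_\atype$ via constant functions and reflexivity of $\geqwm$. The only (cosmetic) difference is that the paper folds non-emptiness into the same induction as a strengthened hypothesis, whereas you run it as a separate auxiliary induction, which works equally well provided reflexivity is established first.
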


\commentaar{the definition in~\cite{pol:96:1} actually
assigns a different
set $\basealgebra_\abasetype$ to
each base type $\abasetype$ (although there must be an addition
operator $+_{\abasetype,\bbasetype,\abasetype}$ for every
pair of base
types).
We
use the
same set for all base types, as this gives a
simpler
definition, and it is not obvious that using different sets gives
a
stronger technique; we could for instance choose $\basealgebra =
\basealgebra_\abasetype \uplus \basealgebra_\bbasetype$ instead.

Also, in~\cite{pol:96:1} $\WM_{\atype \ftypepijl \btype}$ consists of
functions $f$ in a larger function space
$\mathcal{I}_\atype
\functionpijl \mathcal{I}_\btype$\footnote{Here,
$\mathcal{I}_\abasetype = \basealgebra_\abasetype$ if $\abasetype \in
\setsorts$, and $\mathcal{I}_{\atype \ftypepijl \btype}$ is the full
function space $\mathcal{I}_\atype \functionpijl \mathcal{I}_\btype$.}
such that $f(x) \in \WM_\btype$ if $x \in \WM_\atype$ and $f$
preserves $\geqwm$.
Our definition is essentially equivalent;
every function in
$\WM_{\atype} \functionpijl \WM_{\btype}$ can be extended to a
function in $\mathcal{I}_\atype \functionpijl \mathcal{I}_\btype$.
}

\begin{example}[Some Examples of Weakly Monotonic Functionals]
\label{exa:wmf}\
\begin{enumerate}
\item\label{ex:wmf:constants}
  \emph{Constant Function:}
  For all $n \in \basealgebraset$ and types $\btype =
  \btype_1 \typepijl \ldots \typepijl \btype_k \typepijl \abasetype$,
  let $n_\btype := \fatlambda \vec{x}.n$.
  Then $n_\btype \in \WM_\btype$, since $n_\btype(N_1,\ldots,N_k)
  = n \geqwm n = n_\btype(M_1,\ldots,M_k)$ if all $N_i \geqwm M_i$.
\item\label{ex:wmf:lowval}
  \emph{Lowest Value Function:}
  Suppose $\basealgebraset$ has
  a
  minimal element $0$ for the ordering
  $>$.  Then for any type $\btype = \btype_1 \typepijl \ldots
  \typepijl \btype_k \typepijl \abasetype$ the function $\fatlambda
  \afun.\afun(\vec{0})$, which maps $\afun \in \WM_\btype$ to
  $\afun(0_{\btype_1},\ldots,0_{\btype_k})
  $ (where each $0_{\btype_i}$ is a constant function),
  is 
  in $\WM_{\btype \ftypepijl \o}$
  by induction on $k$.
\item\label{ex:wmf:max}
  \emph{Maximum Function:}
  In the natural numbers, the function $\max$ which assigns to
  any two numbers the highest of the two is weakly monotonic, since
  $\max(a,b) \geq \max(a',b')$ if $a \geq a'$ and $b \geq b'$.
  For any type $\btype = \btype_1 \typepijl \ldots \typepijl \btype_k
  \typepijl \abasetype$ (with $\abasetype \in \setsorts$) let
  $\max_\btype(f,m) = \fatlambda \avar_1 \ldots \avar_k.
  \max(f(\avar_1,\ldots,\avar_k),m)$.
  This function is in $\WM_{\btype \ftypepijl \abasetype \ftypepijl
  \btype}$ by induction on $k$.
\end{enumerate}
The constant and lowest value function appear in~\cite{pol:96:1};
the maximum function appears in~\cite{kop:raa:11:1}.
\end{example}

\begin{definition}[Interpreting a $\lambda$-Term to a Weakly
Monotonic Functional]\label{def:alginthrs}

Given a well-founded set $\basealgebra = (\basealgebraset,>,\geq)$, a
simply-typed $\lambda$-term $\aterm$ and a \emph{valuation} 
$\varvaluation$
which assigns to all variables $\avar : \atype$ in $\FV(\aterm)$ an
element of $\WM_\atype$, let $\lamalgint{\aterm}_\varvaluation$ be
defined by the following
clauses:
\[
\begin{array}{llll}
\lamalgint{\avar}_\varvaluation & = & \varvaluation(\avar) &
  \mathrm{if}\ \avar \in \setvar \\
\lamalgint{\app{\aterm}{\bterm}}_\varvaluation & = &
  \lamalgint{\aterm}_\varvaluation(\lamalgint{\bterm}_\varvaluation)
  \\
\lamalgint{\abs{\avar}{\aterm}}_\varvaluation & = & \fatlambda n.
  \lamalgint{\aterm}_{\varvaluation \cup \{\avar \mapsto
  n \}} & \mathrm{if}\ \avar \notin \domain(\varvaluation)\ \ 
  \text{(always applicable with $\alpha$-conversion)} \\
\end{array}
\]
\end{definition}

Definition~\ref{def:alginthrs} is an instance of a definition
in~\cite{pol:96:1} which suffices for the extension to AFSs.
By Lemma~3.2.1 and Proposition~4.1.5(1) in~\cite{pol:96:1}, we have:

\begin{lemma}[Facts on $\lambda$-Term Interpretations]
\label{lem:algintfacts}
\
\begin{enumerate}
\item \emph{(Substitution Lemma)}
  \label{lem:algintfacts:substitution}
  Given a substitution $\asub = [\avar_1:=\aterm_1,\ldots,\avar_n:=
  \aterm_n]$ and a valuation $\varvaluation$ whose domain does not
  include the $\avar_i$: $\lamalgint{\aterm\asub}_\varvaluation =
  \lamalgint{\aterm}_{\varvaluation \circ \asub}$.
  Here, $\varvaluation \circ \asub$ is the valuation $\varvaluation
  \cup \{ \avar_1 \mapsto \lamalgint{\aterm_1}_\varvaluation, \ldots,
  \avar_n \mapsto \lamalgint{\aterm_n}_\varvaluation \}$.
\item
  \label{lem:algintfacts:interprete}
  If $\aterm : \atype$ is a simply-typed $\lambda$-term, then
  $\lamalgint{\aterm}_\varvaluation \in \WM_\atype$ for all
  valuations $\varvaluation$.
\end{enumerate}
\end{lemma}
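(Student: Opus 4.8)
The plan is to prove both parts by structural induction on the $\lambda$-term $\aterm$, following the three clauses of Definition~\ref{def:alginthrs}. Before attacking the stated claims I would isolate two auxiliary facts. The first is a \emph{relevance} fact: $\lamalgint{\aterm}_\varvaluation$ depends only on the restriction of $\varvaluation$ to $\FV(\aterm)$; this follows by a routine induction and is needed to justify the $\alpha$-conversion steps later. The second, and the real heart of the argument, is a \emph{valuation-monotonicity} lemma: if $\varvaluation$ and $\varvaluation'$ have the same domain and $\varvaluation(\avar) \geqwm \varvaluation'(\avar)$ for every $\avar$, then $\lamalgint{\aterm}_\varvaluation \geqwm \lamalgint{\aterm}_{\varvaluation'}$.

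For part~\ref{lem:algintfacts:interprete} I would prove well-definedness and valuation-monotonicity \emph{simultaneously} in one induction, since the two are genuinely entangled. In the variable case both are immediate from the hypothesis on $\varvaluation$. In the application case $\app{\aterm}{\bterm}$ with $\aterm : \btype \typepijl \atype$, the induction hypothesis gives $\lamalgint{\aterm}_\varvaluation \in \WM_{\btype \typepijl \atype}$ and $\lamalgint{\bterm}_\varvaluation \in \WM_\btype$, so applying one to the other lands in $\WM_\atype$ by the definition of $\WM_{\btype \typepijl \atype}$; valuation-monotonicity then follows by chaining $\lamalgint{\aterm}_\varvaluation(\lamalgint{\bterm}_\varvaluation) \geqwm \lamalgint{\aterm}_\varvaluation(\lamalgint{\bterm}_{\varvaluation'}) \geqwm \lamalgint{\aterm}_{\varvaluation'}(\lamalgint{\bterm}_{\varvaluation'})$, where the first step uses that $\lamalgint{\aterm}_\varvaluation$ is weakly monotonic (well-definedness, IH) together with $\lamalgint{\bterm}_\varvaluation \geqwm \lamalgint{\bterm}_{\varvaluation'}$ (monotonicity, IH), the second uses $\lamalgint{\aterm}_\varvaluation \geqwm \lamalgint{\aterm}_{\varvaluation'}$ (monotonicity, IH) unfolded through the definition of $\geqwm$ at arrow type, and the whole chain closes by transitivity from Lemma~\ref{lem:gwminteract}. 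In the abstraction case $\abs{\avar}{\aterm} : \btype \typepijl \ctype$, I would show that $\fatlambda n.\lamalgint{\aterm}_{\varvaluation \cup \{\avar \mapsto n\}}$ sends $\WM_\btype$ into $\WM_\ctype$ (well-definedness, IH on the extended valuation) and preserves $\geqwm$; the latter is exactly valuation-monotonicity applied to two valuations that differ only at $\avar$, which is why the two statements must be carried together.

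For part~\ref{lem:algintfacts:substitution} (the substitution lemma) I would run a separate induction on $\aterm$. The variable case splits on whether $\avar$ lies in $\domain(\asub)$, and in both subcases the two sides reduce to the same value by the definition of $\varvaluation \circ \asub$. The application case is immediate from the homomorphic clause and the induction hypothesis. The only delicate case is the abstraction $\abs{\avar}{\aterm}$: here I would first rename $\avar$ by $\alpha$-conversion so that it is fresh for $\domain(\asub)$, for $\domain(\varvaluation)$, and for the free variables of every $\aterm_i$, so that $(\abs{\avar}{\aterm})\asub = \abs{\avar}{\aterm\asub}$. Pushing $\lamalgint{\cdot}$ through and invoking the induction hypothesis with the extended valuation $\varvaluation \cup \{\avar \mapsto n\}$ leaves the obligation $(\varvaluation \cup \{\avar \mapsto n\}) \circ \asub = (\varvaluation \circ \asub) \cup \{\avar \mapsto n\}$, which holds precisely because $\avar \notin \FV(\aterm_i)$ makes $\lamalgint{\aterm_i}_{\varvaluation \cup \{\avar \mapsto n\}} = \lamalgint{\aterm_i}_\varvaluation$ by the relevance fact.

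I expect the main obstacle to be the abstraction case in part~\ref{lem:algintfacts:interprete}: establishing that the interpretation of an abstraction is \emph{itself} a weakly monotonic functional is not a purely local check, because weak monotonicity of $\fatlambda n.\lamalgint{\aterm}_{\varvaluation \cup \{\avar \mapsto n\}}$ is exactly the valuation-monotonicity property specialised to a single coordinate. This forces the simultaneous induction and makes the careful bookkeeping of the application case — correctly splitting the two-sided inequality and appealing to both halves of the hypothesis — the technically load-bearing step.
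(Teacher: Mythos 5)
Your proof is correct, but it is worth pointing out that the paper itself does not prove this lemma at all: it imports both parts directly from van de Pol's thesis (Lemma~3.2.1 for the substitution lemma, Proposition~4.1.5(1) for membership in $\WM_\atype$), and the only original work in the paper is the appendix check that these imported proofs survive the generalisation of $\geq$ from the reflexive closure of $>$ to an arbitrary compatible quasi-ordering. Your proposal instead reconstructs the argument from scratch, and it does so along exactly the lines the imported proof must follow: the appendix's list of facts needed for Proposition~4.1.5 (that $f \in \WM_{\atype \ftypepijl \btype}$ applied to $x \in \WM_\atype$ lands in $\WM_\btype$, that $f \geqwm g$ implies $f(x) \geqwm g(x)$, that weakly monotonic $f$ preserves $\geqwm$, plus reflexivity of $\geqwm$) is precisely the set of ingredients your simultaneous induction consumes in its application and abstraction cases. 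Your identification of the genuine subtlety --- that well-definedness of the abstraction case \emph{is} valuation-monotonicity of the body, forcing the two statements to be carried through one induction --- is the load-bearing observation, and your handling of the abstraction case of the substitution lemma (renaming the bound variable away from $\domain(\asub)$, $\domain(\varvaluation)$ and $\FV(\aterm_i)$, then using the relevance fact to commute the extended valuation past $\circ\, \asub$) is the standard and correct treatment. Two cosmetic remarks: in the abstraction case of valuation-monotonicity you compare $\varvaluation \cup \{\avar \mapsto n\}$ with $\varvaluation' \cup \{\avar \mapsto n\}$, which needs $n \geqwm n$, i.e.\ reflexivity of $\geqwm$ --- this is available in Lemma~\ref{lem:gwminteract} but you should cite it explicitly, since under the paper's generalised notion of well-founded set reflexivity of $\geq$ is a genuine assumption rather than automatic; and strictly speaking one should also note that the relevance fact makes $\lamalgint{\cdot}$ well defined on $\alpha$-equivalence classes, which you gesture at but do not spell out. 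Neither affects correctness; your proof is a valid self-contained replacement for the paper's citation.
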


\section{(Weakly and Extended) Monotonic Algebras for AFSs}\label{sec:monalg}

The theory in~\cite{pol:96:1} was defined for Nipkow's formalism of
\emph{Higher-order Rewrite Systems (HRSs)}~\cite{nip:91:1}, which
differs in several
ways from our \emph{Algebraic Functional Systems}.  Most importantly,
in the setting of HRSs
terms are equivalence classes modulo $\beta$;
thus, the definitions in~\cite{pol:96:1} are designed
so
that
$\algint{\aterm} = \algint{\bterm}$ if $\aterm$ and $\bterm$ are
equal modulo $\beta$.
This is not convenient for AFSs,
since then
for instance
$\algint{\app{(\abs{\avar}{\nul})}{\bterm}} =
\algint{\app{(\abs{\avar}{\nul})}{\cterm}}$ regardless of
$\bterm$ and $\cterm$.

Fortunately, we do not need to redesign the whole theory for use
with AFSs; rather, we can transpose the result using a
transformation.  We will need no more than
Lemma~\ref{lem:algintfacts}.

\note{some of the results of this section have also been
stated
in~\cite{kop:raa:11:1}, but the results there
are limited to what is needed for the dynamic dependency pair
approach; here, we are more general, by not fixing the interpretation of
application and
also
studying strong
monotonicity.}

\begin{definition}[Weakly Monotonic Algebras for AFSs]
\label{def:wmf:afs}
A \emph{weakly monotonic algebra} for an AFS with function symbols
$\F$ consists of a well-founded set $\basealgebra =
(\basealgebraset,>,\geq)$ and an
\emph{interpretation function} $\constvaluation$ which assigns
an element of $\WM_{\atype_1 \ftypepijl \ldots \ftypepijl
\atype_n \ftypepijl \btype}$
to all $\afun : [\atype_1 \times \ldots \times \atype_n] \decpijl
\btype \in \F$, and
a value in
$\WM_{\atype \ftypepijl \atype}$ to the fresh symbol $@^\atype$ for
all functional types $\atype$.

Given an algebra $(\basealgebra,\constvaluation)$, a term $\aterm$
over $\F$ and a \emph{valuation} $\alpha$ which assigns to all
variables $\avar : \atype$ in $\FV(\aterm)$ an element of
$\WM_\atype$, let $\algintc{\aterm}$ be defined recursively as
follows: \\\indent
$
\begin{array}{llll}
\algintc{\avar} & = & \varvaluation(\avar) & \mathrm{if}\ \avar
  \in \setvar \\
\algintc{\afun(\aterm_1,\ldots,\aterm_n)} & = &
  \constvaluation(\afun)(\algintc{\aterm_1},\ldots,\algintc{\aterm_n})
  & \mathrm{if}\ \afun \in \F \\
\algintc{\app{\aterm}{\bterm}} & = & \constvaluation(@^\atype)(
  \algintc{\aterm},\algintc{\bterm}) & \mathrm{if}\ \aterm : \atype \\
\algintc{\abs{\avar}{\aterm}} & = & \fatlambda n.
  \algint{\aterm}_{\constvaluation,\varvaluation \cup \{\avar \mapsto
  n \}} & \mathrm{if}\ \avar \notin \domain(\varvaluation) \\
\end{array}
$
\end{definition}

\noindent
This definition, which roughly follows the ideas
of~\cite{pol:96:1} and extends the definition of a weakly
monotone algebra in~\cite{end:wal:zan:08} to the setting of AFSs,
assigns to every function symbol and
variable a weakly monotonic functional, and calculates the value of
the term accordingly.  For the purposes of the interpretation,
application is treated as a function symbol $@^\atype$.
As in~\cite{pol:96:1}, the interpretation function
$\constvaluation$ is separate from the valuation $\varvaluation$,
as we will quantify over $\varvaluation$.

\begin{example}
Consider the $\hshuffle$ signature from Example~\ref{ex:hshuffledef},
extended with symbols $\nul$ and $\suc$ for
the natural
numbers.  Let $\basealgebra = (\N,>,\geq)$.
By way of example, choose:
$
\constvaluation(\nul) = 1,\ 
\constvaluation(\suc) = \fatlambda n.n+2,\ 
\constvaluation(\cons) = \fatlambda n m.n+m,\ 
\constvaluation(\hshuffle) = \fatlambda F n.F(n)$
and
$\varvaluation(\cvar) = 37$.
Then $\algintc{\hshuffle(\abs{\avar}{\suc(\avar)},\cons(\suc(\nul),
\cvar))} = \algint{F(n)}_{\constvaluation,\{F \mapsto \fatlambda m.
m+2, n \mapsto 40\}} = 42$.
\end{example}

\begin{lemma}[Weakly Monotonic Algebras for AFSs]\label{lem:wmf:afs}
Let $(\basealgebra,\constvaluation)$ be a weakly monotonic algebra
for $\F$, and $\aterm,\bterm$ terms over $\F$.
For all valuations $\varvaluation$ as described in
Definition~\ref{def:wmf:afs}:
\begin{enumerate}
\item \label{lem:wmf:afsm:wm}
  $\algintc{\aterm} \in \WM_\atype$ if $\aterm : \atype$.
\item \label{lem:wmf:afsm:substitution}
  $\algint{\aterm}_{\constvaluation,\varvaluation \circ \asub} =
  \algintc{\aterm\asub}$
  (where $\varvaluation \circ \asub = \varvaluation \cup \{
  \avar \mapsto \algintc{\asub(\avar)} \mid \avar \in \domain(\asub)
  \}$)
\item \label{lem:wmf:afsm:stable}
  If $\algint{\aterm}_{\constvaluation,\varvaluationb} \geqwm
  \algint{\bterm}_{\constvaluation,\varvaluationb}$ for all
  valuations $\varvaluationb$, then $\algintc{\aterm\asub}
  \geqwm \algintc{\bterm\asub}$. \\
  If $\algint{\aterm}_{\constvaluation,\varvaluationb} \gwm
  \algint{\bterm}_{\constvaluation,\varvaluationb}$ for all
  valuations $\varvaluationb$, then $\algintc{\aterm\asub} \gwm
  \algintc{\bterm\asub}$.
\item \label{lem:wmf:afsm:context}
  If $\algint{\aterm}_{\constvaluation,\varvaluationb} \geqwm
  \algint{\bterm}_{\constvaluation,\varvaluationb}$ for all
  valuations $\varvaluationb$, then
  $\algintc{C[\aterm]} \geqwm \algintc{C[\bterm]}$.
\end{enumerate}
\end{lemma}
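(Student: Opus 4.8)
The lemma has four parts, and my plan is to prove them in the given order, since each part builds on the previous ones. The whole argument rests on transporting the properties of $\lamalgint{\cdot}$ (the pure $\lambda$-term interpretation from Lemma~\ref{lem:algintfacts}) to the AFS interpretation $\algintc{\cdot}$. The natural bridge is a \emph{flattening} transformation $\flatten$ that rewrites an AFS-term into a pure $\lambda$-term: replace each function symbol $\afun$ by a fresh $\lambda$-variable (or rather treat $\constvaluation(\afun)$ as the interpretation of such a variable under an extended valuation) and replace each application node $\app{\aterm}{\bterm}$ by $\app{(@^\atype \cdot \flatten(\aterm))}{\flatten(\bterm)}$, where $@^\atype$ is again handled as a variable whose valuation is $\constvaluation(@^\atype)$. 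Concretely, I would fix a valuation $\hat\varvaluation$ extending $\varvaluation$ that maps each symbol-variable to its $\constvaluation$-image, and prove the key bridging identity $\algintc{\aterm} = \lamalgint{\flatten(\aterm)}_{\hat\varvaluation}$ by a straightforward induction on $\aterm$, matching the four clauses of Definition~\ref{def:wmf:afs} against the three clauses of Definition~\ref{def:alginthrs}.

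\textbf{Part~1 (well-definedness)} then follows almost immediately: since $\flatten(\aterm)$ is a genuine simply-typed $\lambda$-term of the same type $\atype$, Lemma~\ref{lem:algintfacts}(\ref{lem:algintfacts:interprete}) gives $\lamalgint{\flatten(\aterm)}_{\hat\varvaluation} \in \WM_\atype$, and the bridging identity transfers this to $\algintc{\aterm}$. The only thing to check is that the extended valuation $\hat\varvaluation$ is well-typed, i.e.\ that $\constvaluation(\afun) \in \WM_{\atype_1 \ftypepijl \ldots \ftypepijl \atype_n \ftypepijl \btype}$ and $\constvaluation(@^\atype) \in \WM_{\atype \ftypepijl \atype}$, which is exactly what Definition~\ref{def:wmf:afs} demands of an interpretation function. \textbf{Part~2 (substitution)} similarly reduces to the Substitution Lemma~\ref{lem:algintfacts}(\ref{lem:algintfacts:substitution}): I would argue that $\flatten$ commutes with substitution in the appropriate sense (flattening a substituted term equals substituting the flattened terms into the flattened term, since $\flatten$ touches only function-symbol and application structure, not the variables being substituted), and then chase the two interpretations through the first-order substitution lemma. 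The compatibility of the composed valuation $\varvaluation \circ \asub$ with $\hat\varvaluation \circ \flatten(\asub)$ is the bookkeeping to verify here.

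\textbf{Parts~3 and~4} are where the orderings enter, and I expect the main obstacle to lie in Part~4. Part~3 is a direct consequence of Part~2: given $\algint{\aterm}_{\constvaluation,\varvaluationb} \sorsuccsim \algint{\bterm}_{\constvaluation,\varvaluationb}$ for all $\varvaluationb$, I instantiate $\varvaluationb := \varvaluation \circ \asub$ and rewrite both sides via Part~2 to obtain $\algintc{\aterm\asub} \sorsuccsim \algintc{\bterm\asub}$, handling the $\geqwm$ and $\gwm$ cases uniformly. Part~4 (closure under contexts) I would prove by induction on the context $C$. The base case $C = \Box$ is trivial. For the inductive cases I must push the relation $\geqwm$ through the interpretation of each term former: through an application node $\app{\cdot}{\cterm}$ or $\app{\cterm}{\cdot}$, through a function-symbol argument position $\afun(\ldots,\cdot,\ldots)$, and through an abstraction $\abs{\avar}{\cdot}$. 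This is precisely where \emph{weak monotonicity} is used: the defining property of $\WM_{\atype \ftypepijl \btype}$ says that $f$ preserves $\geqwm$, so applying $\constvaluation(@^\atype)$ or $\constvaluation(\afun)$ to arguments that are related by $\geqwm$ yields results related by $\geqwm$. The subtle point — and the step I would be most careful about — is the abstraction case: here $\algintc{\abs{\avar}{C'[\aterm]}} = \fatlambda n. \algint{C'[\aterm]}_{\constvaluation,\varvaluation \cup \{\avar \mapsto n\}}$, and I need the induction hypothesis to hold for the \emph{augmented} valuation uniformly in $n$; because the hypothesis of Part~4 quantifies over all valuations $\varvaluationb$, this goes through, and comparing the two resulting functions pointwise in $n$ gives exactly $\geqwm_{\atype \ftypepijl \btype}$ by its definition. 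Note that Part~4 is stated only for $\geqwm$, not for $\gwm$, which is the expected restriction: a strict inequality need not survive under an arbitrary context (e.g.\ under a constant-function interpretation), so there is no strict analogue to establish.
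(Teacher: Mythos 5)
Your proposal is correct and follows essentially the same route as the paper: your flattening transformation with the bridging identity $\algintc{\aterm} = \lamalgint{\flatten(\aterm)}_{\hat\varvaluation}$ is precisely the paper's translation $\varphi$ with its observations (**) and (***), and Parts 1--3 are discharged from Lemma~\ref{lem:algintfacts} exactly as the paper does, with Part 3 obtained by instantiating the valuation via Part 2. Your Part 4 argument (induction on $C$, using weak monotonicity of $\constvaluation(\afun)$ and $\constvaluation(@^\atype)$ at application/function nodes and pointwise comparison under the augmented valuation at abstractions) matches the paper's proof, including the correct observation that no strict analogue holds without further assumptions.
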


\begin{proof}
The proof proceeds by translating (arbitrary) terms to simply-typed
$\lambda$-terms, and then reusing the original result.  Interpretation
of function symbols ($\constvaluation$) is translated to assignment of
variables ($\varvaluation$), and application is treated as a function
symbol.

Consider the following transformation:
\[
\begin{array}{rclrcll}
\varphi(\avar) & = & \avar\ \ (\avar \in \setvar)\ \ \ \  &
\varphi(\afun(\aterm_1,\ldots,\aterm_n)) & = & \app{\avar_\afun}{
  \varphi(\aterm_1)} \cdots \varphi(\aterm_n) & (\afun \in \F) \\
\varphi(\abs{\avar}{\aterm}) & = & \abs{\avar}{\varphi(\aterm)} &
\varphi(\app{\aterm}{\bterm}) & = & \app{\app{\avar_{@,\atype}}{
  \varphi(\aterm)}}{\varphi(\bterm)} & (\aterm : \atype) \\
\end{array}
\]
Here, the $\avar_\afun$ is a new variable of type
$\atype_1 \typepijl \ldots \typepijl \atype_n \typepijl \btype$
for $\afun : [\atype_1 \times \ldots \times \atype_n] \decpijl
\btype \in \F$, and $\avar_{@,\atype}$ is a variable of type
$\atype \typepijl \atype$.
For any substitution $\asub$, let $\asub^\varphi$ denote the
substitution
$[\avar:=\varphi(\asub(\avar)) \mid \avar \in \domain(\asub)]$
(the $\avar_\afun$ are left
alone).  We make the following observations:

\emph{(**) $\varphi(\aterm\asub) = \varphi(\aterm)\asub^\varphi$ for
all substitutions $\asub$}.

\emph{(***) $\algintc{\aterm} = \lamalgint{\varphi(\aterm)}_{
\varvaluationb}$, if $\varvaluationb(\avar) = \varvaluation(\avar)$
for $\avar \in \FV(\aterm)$,
$\varvaluationb(\avar_\afun) =
\constvaluation(\afun)$,\ $\varvaluationb(\avar_{@,\atype}) =
\constvaluation(@^\atype)$}

Both statements hold by a straightforward induction on the form of
$\aterm$.

\smallskip\noindent
\textbf{(\ref{lem:wmf:afsm:wm})} holds by (***) and
Lemma~\ref{lem:algintfacts}(\ref{lem:algintfacts:interprete}).
\textbf{(\ref{lem:wmf:afsm:substitution})} holds because
$\algintc{\aterm\asub} =
\lamalgint{\varphi(\aterm \asub)}_\varvaluationb$ by (***), $=
\lamalgint{\varphi(\aterm)\asub^\varphi}_\varvaluationb$ by (**),
$= \lamalgint{\varphi(\aterm)}_{\varvaluationb \circ \asub^\varphi}$
by Lemma~\ref{lem:algintfacts}(\ref{lem:algintfacts:substitution}),
which is exactly $\algint{\aterm}_{\constvaluation,\varvaluation
\circ \asub}$ by (***).
\textbf{(\ref{lem:wmf:afsm:stable})} holds by
(\ref{lem:wmf:afsm:substitution}):
$\algintc{\aterm \asub} =
\algint{\aterm}_{\constvaluation,\varvaluation \circ \asub}$ by
(\ref{lem:wmf:afsm:substitution}),
$\geqwm \algint{\bterm}_{\constvaluation,\varvaluation \circ \asub}
=
\algintc{\bterm\asub}$, and similar for $\gwm$.
\textbf{(\ref{lem:wmf:afsm:context})} holds by a straightforward
induction on the form of $C$ (this result has no counterpart
in~\cite{pol:96:1}).
\end{proof}

\noindent
The theory so far allows us to use weakly monotonic algebras in a
\emph{weak reduction pair}.

\begin{theorem}\label{thm:weakredpair}
Let a weakly monotonic algebra $(\basealgebra,\constvaluation)$ be
given such that always $\constvaluation(@^\atype) \geqwm \fatlambda
f n.f(n)$, and define the pair $(\geqterm,\gterm)$ by:
$\aterm \geqterm \bterm$ if $\algintc{\aterm} \geqwm \algintc{\bterm}$
for all valuations $\varvaluation$, and $\aterm \gterm \bterm$ if
$\algintc{\aterm} \gwm \algintc{\bterm}$ for all $\varvaluation$.
Then $(\geqterm,\gterm)$ is a weak reduction pair.
\end{theorem}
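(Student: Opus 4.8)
The plan is to verify, one by one, each of the properties required of a weak reduction pair: that $\geqterm$ is a quasi-ordering, that $\gterm$ is a well-founded ordering compatible with $\geqterm$, that both relations are stable under substitution, that $\geqterm$ is monotonic, and that $\geqterm$ contains $\beta$. Throughout, I would lean almost entirely on Lemma~\ref{lem:gwminteract} (for the order-theoretic facts about $\gwm$ and $\geqwm$ at the level of the target domain $\WM_\atype$) and on Lemma~\ref{lem:wmf:afs} (to lift those facts from the target domain to the relations on terms). The definitions of $\geqterm$ and $\gterm$ are quantified over all valuations $\varvaluation$, so each term-level property should reduce cleanly to the corresponding pointwise property of $\geqwm$ and $\gwm$.

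First I would dispatch the order-theoretic properties. Reflexivity and transitivity of $\geqterm$, well-foundedness and transitivity of $\gterm$, and compatibility of the pair all follow directly from Lemma~\ref{lem:gwminteract}, applied valuation-by-valuation: for instance, transitivity of $\geqterm$ holds because if $\algintc{\aterm} \geqwm \algintc{\bterm}$ and $\algintc{\bterm} \geqwm \algintc{\cterm}$ for all $\varvaluation$, then transitivity of $\geqwm$ gives $\algintc{\aterm} \geqwm \algintc{\cterm}$ for all $\varvaluation$. Well-foundedness of $\gterm$ needs a small argument: an infinite descending chain $\aterm_1 \gterm \aterm_2 \gterm \cdots$ would, after fixing a single valuation $\varvaluation$, produce an infinite $\gwm$-descending chain in $\WM$, contradicting well-foundedness of $\gwm$ from Lemma~\ref{lem:gwminteract}. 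Stability of both relations is exactly Lemma~\ref{lem:wmf:afs}(\ref{lem:wmf:afsm:stable}): the hypothesis ``for all valuations $\varvaluationb$'' is precisely the definition of $\aterm \geqterm \bterm$ (resp.\ $\gterm$), and the conclusion gives $\aterm\asub \geqterm \bterm\asub$ (resp.\ $\gterm$).

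Monotonicity of $\geqterm$ means that $\aterm \geqterm \bterm$ implies $C[\aterm] \geqterm C[\bterm]$ for every context $C$; this is delivered by Lemma~\ref{lem:wmf:afs}(\ref{lem:wmf:afsm:context}), again because its hypothesis matches the definition of $\geqterm$. For the $\beta$-containment clause I must show $\app{(\abs{\avar}{\aterm})}{\bterm} \geqterm \aterm[\avar:=\bterm]$, i.e.\ $\algintc{\app{(\abs{\avar}{\aterm})}{\bterm}} \geqwm \algintc{\aterm[\avar:=\bterm]}$ for all $\varvaluation$. Unfolding the interpretation of application gives $\constvaluation(@^\atype)\bigl(\lamalgint{\abs{\avar}{\aterm}}, \algintc{\bterm}\bigr)$, where the abstraction interprets to $\fatlambda n.\algint{\aterm}_{\varvaluation\cup\{\avar\mapsto n\}}$. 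The hypothesis $\constvaluation(@^\atype) \geqwm \fatlambda f n.f(n)$ lets me bound this below by $\bigl(\fatlambda n.\algint{\aterm}_{\varvaluation\cup\{\avar\mapsto n\}}\bigr)\!\bigl(\algintc{\bterm}\bigr) = \algint{\aterm}_{\varvaluation\cup\{\avar\mapsto \algintc{\bterm}\}}$, and the substitution lemma Lemma~\ref{lem:wmf:afs}(\ref{lem:wmf:afsm:substitution}) identifies this last quantity with $\algintc{\aterm[\avar:=\bterm]}$.

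The step I expect to be the main obstacle is this last $\beta$-clause, since it is the only place where the special hypothesis $\constvaluation(@^\atype) \geqwm \fatlambda f n.f(n)$ is used, and I must be careful that applying the inequality $\constvaluation(@^\atype) \geqwm \fatlambda f n.f(n)$ to the specific arguments $(\lamalgint{\abs{\avar}{\aterm}}, \algintc{\bterm})$ is legitimate — this is exactly the pointwise meaning of $\geqwm$ at a functional type from the definition of weakly monotonic functionals. The remaining subtlety is bookkeeping around the valuation: I need $\avar \notin \domain(\varvaluation)$, which is always arrangeable by $\alpha$-conversion, and I must keep the quantification over $\varvaluation$ consistent so that each pointwise inequality assembles into the term-level relation. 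Everything else is a routine transfer of Lemma~\ref{lem:gwminteract} and Lemma~\ref{lem:wmf:afs} through the quantifier over valuations.
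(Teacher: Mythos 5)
Your proposal is correct and follows essentially the same route as the paper's proof: order-theoretic properties and compatibility from Lemma~\ref{lem:gwminteract}, stability from Lemma~\ref{lem:wmf:afs}(\ref{lem:wmf:afsm:stable}), monotonicity from Lemma~\ref{lem:wmf:afs}(\ref{lem:wmf:afsm:context}), and $\beta$-containment by unfolding the interpretation of application, invoking $\constvaluation(@^\atype) \geqwm \fatlambda f n.f(n)$ pointwise, and closing with the substitution lemma, exactly as in the paper (which is terser and leaves stability implicit). The only blemish is notational: you write $\lamalgint{\abs{\avar}{\aterm}}$ where you mean $\algintc{\abs{\avar}{\aterm}}$, which does not affect the argument.
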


\begin{proof}
$(\geqterm,\gterm)$ is a compatible combination of a quasi-ordering
and a well-founded ordering
by Lemma~\ref{lem:gwminteract} and $\geqterm$ is
monotonic by Lemma~\ref{lem:wmf:afs}(\ref{lem:wmf:afsm:context}).
Also, $\geqterm$ contains $\mathtt{beta}$: for all valuations
$\varvaluation$,\ $\algintc{\app{(\abs{\avar}{\aterm})}{\bterm}} =
\constvaluation(@^\atype)(\algintc{\abs{\avar}{\aterm}},
  \algintc{\bterm}) \geqwm
\algintc{\abs{\avar}{\aterm}}(\algintc{\bterm})$ by assumption,
which equals
$\algint{\aterm}_{\constvaluation,\varvaluation \cup \{\avar
  \mapsto \algintc{\bterm}\}} =
\algint{\aterm}_{\constvaluation,\varvaluation \circ [\avar:=
  \bterm]}$, and this equals
  $\algintc{\aterm[\avar:=\bterm]}$ by
Lemma~\ref{lem:wmf:afs}(\ref{lem:wmf:afsm:substitution}).
\end{proof}

\commentaar{if we choose
$\constvaluation(@^\atype) = \fatlambda f n.f(n)$, we have a system
very similar to the one used for simply-typed $\lambda$-calculus (and
HRSs).  By not fixing the interpretation of
$@^\atype$ we have a choice, which, depending on the setting
(rule removal, static dependency pairs, dynamic dependency pairs)
may be essential; we will see different choices in
Examples~\ref{ex:staticdp}, \ref{ex:dynamicdp}
and~\ref{ex:rulesremove}.
}

\begin{example}\label{ex:staticdp}
Using the static dependency pair framework
of~\cite{kus:iso:sak:bla:09:1} to deal with
$\hshuffle$, we obtain several sets of requirements.
HORPO~\cite{jou:rub:07:1}
runs into trouble with the dependency pair \newline
$\up{\hshuffle}(F,\cons(h,t)) \dppijl \up{\hshuffle}(F,\reverse(t))$,
where we need a weak reduction pair satisfying:
\[
\begin{array}{rclrcl}
\up{\hshuffle}(F,\cons(h,t)) & \gterm & \up{\hshuffle}(F,\reverse(t)) \\
\append(\cons(h,t),l) & \geqterm & \cons(h,\append(t,l)) &
\append(\nil,l) & \geqterm & l \\
\reverse(\cons(h,t)) & \geqterm & \append(\reverse(t),\cons(h,\nil)) &
\reverse(\nil) & \geqterm & \nil \\
\end{array}
\]
Using Theorem~\ref{thm:weakredpair}, we choose
the following interpretation $\constvaluation$ in the natural numbers: \\
\phantom{i}
$
\begin{array}{rclrclrcl}
\constvaluation(\up{\hshuffle}) & = & \fatlambda f n.n &
\constvaluation(\cons) & = & \fatlambda n m.m+1 &
\constvaluation(\nil) & = & 0 \\
\constvaluation(\reverse) & = & \fatlambda n.n &
\constvaluation(\append) & = & \fatlambda n m.n+m &
\constvaluation(@^\atype) & = & \fatlambda f n.f(n)\ 
  \mathrm{for\ all}\ \atype \\
\end{array}
$ \smallskip \\
Quantifying over the valuation, it suffices to show that for all $F
\in \WM_{\nat \ftypepijl \nat}, h,t \in \N$:
$t+1 > t,\ 
t+l+1 \geq t+l+1,\ 
t+1 \geq t+1,\ 
l \geq l,\ 
0 \geq 0$. This is obviously the case!
\end{example}

\begin{example}\label{ex:dynamicdp}
For a case where we cannot choose $\constvaluation(@^\atype) =
\fatlambda f n.f(n)$, consider
$\collapse$:
\[
\begin{array}{rclrclrcl}
\nul & : & \nat &
\minimum & : & [\nat \times \nat] \decpijl \nat &
\cons & : & [(\nat \typepijl \nat) \times \funlist] \decpijl \funlist \\
\suc & : & [\nat] \typepijl \nat &
\diff & : & [\nat \times \nat] \decpijl \nat &
\build & : & [\nat] \decpijl \funlist \\
\nil & : & \funlist &
\gcd & : & [\nat \times \nat] \decpijl \nat &
\collapse & : & [\funlist] \decpijl \nat \\
\end{array}
\]
\pagebreak
\[
\begin{array}{rclrcl}
\minimum(x,\nul) & \arrz & \nul &
\gcd(\suc(x),\nul) & \arrz & \suc(x) \\
\minimum(\nul,x) & \arrz & \nul &
\gcd(\nul,\suc(x)) & \arrz & \suc(x) \\
\minimum(\suc(x),\suc(y)) & \arrz & \suc(\minimum(x,y)) &
\gcd(\suc(x),\suc(y)) & \arrz & \gcd(\diff(x,y),\suc(\minimum(x,y)))\!\! \\
\diff(x,\nul) & \arrz & x &
\build(\nul) & \arrz & \nil \\
\diff(\nul,x) & \arrz & x &
\build(\suc(x)) & \arrz & \cons(\abs{y}{\gcd(y,x)},\build(x))\!\! \\
\diff(\suc(x),\suc(y)) & \arrz & \diff(x,y) &
\collapse(\nil) & \arrz & \nul \\
& & \multicolumn{2}{r}{
\collapse(\cons(F,t))} & \arrz & \app{F}{\collapse(t)} \\
\end{array}
\]
This AFS is not plain function passing, so we cannot use static
dependency pairs.
Using dynamic dependency pairs, HORPO
runs into trouble when faced with the
constraints:
\[
\begin{array}{rclrcl}
\up{\collapse}(\cons(F,t)) & \geqorgterm & \app{F}{\collapse(t)} \\
\up{\collapse}(\cons(F,t)) & \geqorgterm & \up{\collapse}(t)\ \  &
l & \geqterm & r\ \ \mathrm{for\ all\ rules}\ l \arrz r\ \mathrm{listed\ 
  above} \\
\end{array}
\]
The $\geqorgterm$ relation denotes that the constraint can either be
oriented with $\geqterm$ or with $\gterm$; to make progress, at least
one of these constraints must be oriented with $\gterm$.
Recall that in the dynamic dependency pair approach the constraints
must be satisfied with a reduction pair that also has
$\app{\aterm}{\bterm_1} \cdots \bterm_n \geqterm \app{\bterm_i}{c_1}
\cdots c_m$ if both sides have base type, for fresh constants $c_j$;
moreover, we must have $\gcd(\avar,\bvar) \geqterm \avar,\bvar$.
To guarantee this,
we choose $\constvaluation(@^{\atype
\ftypepijl \btype}) = \fatlambda f n.\max_\btype(f(n),n(\vec{0}))$,
where $n(\vec{0})$ and $\max_\btype$ were defined in
Example~\ref{exa:wmf}.
Then $\constvaluation(@^{\atype \ftypepijl \btype})
\geqwm \fatlambda f n.f(n)$, and if we assign $\constvaluation(c_j)
= 0_\atype$ for $c_j : \atype$, then
$\algintc{\app{\aterm}{\vec{\bterm}}} \geqwm
\algintc{\app{\bterm_i}{\vec{c}}}$ is indeed satisfied.
Additionally, let
$\constvaluation(\nul) = \constvaluation(\nil) = 0,\ 
\constvaluation(\diff) = \constvaluation(\gcd) =
  \fatlambda n m.n + m,\ 
\constvaluation(\suc) = \constvaluation(\build) =
  \fatlambda n.3 \cdot n,\ 
\constvaluation(\minimum) = \fatlambda n m.0,\ 
\constvaluation(\collapse) = \fatlambda n.n,\ 
\constvaluation(\up{\collapse}) = \fatlambda n.n + 1$ and
$\constvaluation(\cons) = \fatlambda f n.f(n) + n$.

With this interpretation, we have $l \geqterm r$ for all rules.
Moreover, $\algintc{\up{\collapse}(\cons(F,t))} =
1 + F(t) + t > \max(F(t),t) = \algintc{\app{F}{\collapse(t)}}$ and
$\algintc{\up{\collapse}(\cons(F,t))} = 1 + F(t) + t \geq
1 + t = \algintc{\up{\collapse}(t)}$.
As required, we can remove one dependency pair (the first one).
\end{example}

\sparagraaf{Strong Monotonicity}
To use weakly monotonic algebras in the setting of rule removal, we
shall need an additional requirement: $\gwm$ must be monotonic.  This
is achieved by posing a restriction on $\constvaluation$: each
$\constvaluation(\afun)$ should be \emph{strongly monotonic}:

\begin{definition}[Strongly Monotonic Functional]
An element $f$ of $\WM_{\atype_1 \ftypepijl \ldots \ftypepijl
\atype_n \ftypepijl \abasetype}$ is \emph{strongly monotonic in
argument $i$}
if for all $N_1 \in \WM_{\atype_1},\ldots,N_n \in \WM_{\atype_n}$ and
$M_i \in \WM_{\atype_i}$ we have:
$f(N_1,\ldots,N_i,\ldots,N_n) \gwm f(N_1,\ldots,M_i,\ldots,N_n)$
if $N_i \gwm M_i$.
\end{definition}
For first- and second-order functions, strong monotonicity
corresponds with the notion \emph{strict}
in~\cite{pol:96:1}.  For higher-order functions,
the definition
of~\cite{pol:96:1}
is more permissive.
We have chosen to use strong monotonicity
because the strictness requirement significantly complicates the
theory of~\cite{pol:96:1}, and
most common examples of
higher-order systems are second-order.
Strongly monotonic functionals exist for all types, e.g.\ 
$\fatlambda \avar_1 \ldots \avar_n.\avar_1(\vec{0}) + \ldots +
\avar_n(\vec{0}) \in \WM_{\btype_1 \ftypepijl \ldots \ftypepijl
\btype_n \ftypepijl \abasetype}$.

An \emph{extended monotonic algebra} is a weakly monotonic algebra
where each $\constvaluation(@^\atype)$ is strongly monotonic in its
first two arguments,\footnote{Note that e.g.\ 
$\constvaluation(@^{\o \ftypepijl \o \ftypepijl \o})$ is an element
of the function space $\WM_{\o \ftypepijl \o
\ftypepijl \o} \functionpijl \WM_\o \functionpijl \WM_\o
\functionpijl \WM_\o$; a function which takes \emph{three}
arguments.  It need not be strongly monotonic in
its 3$^{\mathrm{rd}}$ argument, because we think of application as a
symbol $@^{\atype \ftypepijl \btype} : [(\atype \typepijl \btype)
\times \atype] \decpijl \btype$ of arity \emph{2}, where $\btype$ may
be functional.} and for $\afun :
[\atype_1 \times \ldots \times \atype_n] \decpijl \btype \in \F$
also $\constvaluation(\afun)$ is strongly monotonic in its first $n$
arguments.
This notion
extends the
corresponding definition from \cite{end:wal:zan:08} for the first-order
setting to the setting of AFSs.
We obtain:

\begin{theorem}\label{thm:strongredpair}
Let an extended monotonic algebra $(\basealgebra,\constvaluation)$ be
given such that always $\constvaluation(@^\atype) \geqwm \fatlambda
f n.f(n)$; the pair $(\geqterm,\gterm)$ from
Theorem~\ref{thm:weakredpair} is a strong reduction pair.
\end{theorem}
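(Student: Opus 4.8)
The plan is to observe that almost all of the work has already been done. By Theorem~\ref{thm:weakredpair}, the pair $(\geqterm,\gterm)$ is already a weak reduction pair: $\geqterm$ is a quasi-ordering, $\gterm$ a compatible well-founded ordering, both are stable, $\geqterm$ is monotonic, and $\geqterm$ contains $\beta$ (the hypothesis $\constvaluation(@^\atype) \geqwm \fatlambda f n.f(n)$ is inherited, and an extended monotonic algebra is in particular a weakly monotonic one). Comparing the definition of a \emph{strong} reduction pair with that of a weak one, the only extra requirement is that $\gterm$, too, be monotonic. So the entire task collapses to proving monotonicity of $\gterm$, and this is exactly where the strong-monotonicity assumptions on $\constvaluation$ come in.

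To prove $\gterm$ monotonic, I would argue by induction on the shape of the context $C$, reducing the claim to four one-hole cases. Assuming $\aterm \gterm \bterm$, i.e. $\algintc{\aterm} \gwm \algintc{\bterm}$ for all $\varvaluation$, I must show (i) $\app{\aterm}{\cterm} \gterm \app{\bterm}{\cterm}$, (ii) $\app{\cterm}{\aterm} \gterm \app{\cterm}{\bterm}$, (iii) $\abs{\avar}{\aterm} \gterm \abs{\avar}{\bterm}$, and (iv) $\afun(\ldots,\aterm,\ldots) \gterm \afun(\ldots,\bterm,\ldots)$. The base case $C = \Box$ is the hypothesis itself, and an arbitrary context is handled by composing these single layers.

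For the application cases (i) and (ii), I unfold $\algintc{\app{\cdot}{\cdot}} = \constvaluation(@^\atype)(\ldots)$ and invoke strong monotonicity of $\constvaluation(@^\atype)$ in its first, respectively second, argument: since $\algintc{\aterm} \gwm \algintc{\bterm}$, replacing the relevant slot produces a $\gwm$-decrease of the whole application, for every $\varvaluation$. Case (iv) is identical in spirit, using strong monotonicity of $\constvaluation(\afun)$ in the argument position that holds the hole. Case (iii) needs no strong monotonicity whatsoever: by definition $\algintc{\abs{\avar}{\aterm}} = \fatlambda n.\algint{\aterm}_{\constvaluation,\varvaluation \cup \{\avar \mapsto n\}}$, and $\gwm$ at a functional type is defined pointwise, so the goal reduces to $\algint{\aterm}_{\constvaluation,\varvaluation \cup \{\avar \mapsto n\}} \gwm \algint{\bterm}_{\constvaluation,\varvaluation \cup \{\avar \mapsto n\}}$ for each $n$, which is immediate because $\aterm \gterm \bterm$ was assumed for \emph{all} valuations, including those assigning $\avar \mapsto n$.

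The step requiring the most care is the bookkeeping for $@$. Strong monotonicity is stated only for functionals with a base-type codomain, whereas $\constvaluation(@^{\atype \ftypepijl \btype})$ returns a functional value when $\btype$ is functional. I would reconcile this by reading strong monotonicity through the pointwise definition of $\gwm$ at functional types — equivalently, by flattening $@$ to its fully applied base-typed form, as in the footnote that treats application as an arity-$2$ symbol — and by verifying that the ``first two arguments'' of the hypothesis match the function slot used in (i) and the argument slot used in (ii). Once that correspondence is pinned down, the remainder is a routine structural induction, and all the genuinely new content lies in the monotonicity of $\gterm$.
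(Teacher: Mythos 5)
Your proposal is correct and follows essentially the same route as the paper: it invokes Theorem~\ref{thm:weakredpair} for the weak reduction pair properties and then establishes monotonicity of $\gterm$ by induction on the context, using strong monotonicity of $\constvaluation(\afun)$ and $\constvaluation(@^\atype)$ for the function-symbol and application cases and the pointwise definition of $\gwm$ (quantifying over all valuations) for the abstraction case, exactly as in the paper's appendix proof. Your remark on reading strong monotonicity of $@^\atype$ through its fully applied, base-typed form also matches how the paper resolves this in its footnote on extended monotonic algebras.
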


\begin{proof}
It is a weak reduction pair by Theorem~\ref{thm:weakredpair}, and
strongly monotonic because
$\algintc{C[\aterm]} \gwm \algintc{C[\bterm]}$ for
all $\varvaluation$ whenever $\algintc{\aterm} \gwm
\algintc{\bterm}$ for all $\varvaluation$ (an easy induction).
\end{proof}

\section{Higher-Order Polynomial Interpretations}\label{sec:polynomial}

It remains to be seen how to \emph{find} suitable polynomial
interpretations, preferably automatically.  In this section, we will
discuss the class of \emph{higher-order polynomials over $\N$}, a
specific subclass of the weakly monotonic functionals with
$(\N,>,\geq)$ as a well-founded base set.  In the
following, we will see
how suitable polynomials can be found automatically.

\begin{definition}[Higher-Order Polynomial over $\N$]\label{def:hopol}
For a set $X = \{ \avar_1 : \atype_1,\ldots,\avar_n : \atype_n \}$ of
variables, each equipped with a type, the set $\Pol(X)$ of
\emph{higher-order polynomials} in $X$ is given by the following
clauses:
\begin{itemize}
\item if $n \in \N$, then $n \in \Pol(X)$;
\item if $p_1, p_2 \in \Pol(X)$, then $p_1 + p_2 \in \Pol(X)$
  and $p_1 \cdot p_2 \in \Pol(X)$;
\item if $\avar_i : \btype_1 \typepijl \ldots \typepijl \btype_m
  \typepijl \abasetype \in X$ with $\abasetype \in \setsorts$, and
  $p_1 \in \Pol^{\btype_1}(X),\ldots,p_m \in \Pol^{\btype_m}(X)$,
  then $\avar_i(p_1,\ldots,p_m) \in \Pol(X)$;
  \begin{itemize}
  \item
    here, $\Pol^\abasetype(X) = \Pol(X)$ for base types $\abasetype$,
    and $\Pol^{\atype \ftypepijl \btype}(X)$ contains functions
    $\fatlambda \bvar.p \in \WM_\atype$ with $p \in \Pol^\btype(X
    \cup \{\bvar\})$.
  \end{itemize}
\end{itemize}
\end{definition}

\noindent
We do not fix the set $X$.
A \emph{higher-order polynomial} is an element of any $\Pol(X)$.

Noting that $\WM_\atype = \WM_\btype$ if $\atype$ and $\btype$ have
the same ``form'' (so are equal modulo renaming of base types), the
following lemma holds for all $\abasetype \in \setsorts$:

\begin{lemma}\label{lem:polynomial:weak}
If $p \in \Pol(\{ \avar_1 : \atype_1, \ldots, \avar_n : \atype_n
\})$, then $\fatlambda \avar_1 \ldots \avar_n.p$
$\in \WM_{\atype_1 \ftypepijl \ldots \ftypepijl
\atype_n \ftypepijl \abasetype}$.
\end{lemma}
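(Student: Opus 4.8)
The plan is to prove this by structural induction on the higher-order polynomial $p \in \Pol(X)$, following the three clauses of Definition~\ref{def:hopol}. The goal is to show that the functional $\fatlambda \avar_1 \ldots \avar_n.p$ lies in $\WM_{\atype_1 \ftypepijl \ldots \ftypepijl \atype_n \ftypepijl \abasetype}$; by the characterisation of weak monotonicity recalled just after the definition of weakly monotonic functionals (the ``$\fatlambda$-characterisation''), it suffices to verify that if $N_i \geqwm M_i$ for all $i$, then the value of $p$ under $\varvaluation = \{\avar_i \mapsto N_i\}$ is $\geqwm$ the value under $\{\avar_i \mapsto M_i\}$. So I would phrase the induction as: for every $p \in \Pol(X)$ and every pair of valuations $\varvaluation, \varvaluationb$ on $X$ with $\varvaluation(\avar_i) \geqwm \varvaluationb(\avar_i)$ for all $i$, the interpretation of $p$ under $\varvaluation$ is $\geqwm$ the interpretation under $\varvaluationb$. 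This is the natural statement to carry through the induction, since it is exactly what the inner clause needs.

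First I would handle the base cases. For a constant $n \in \N$, the interpretation is $n$ regardless of the valuation, so $n \geqwm n$ holds by reflexivity of $\geqwm$ (Lemma~\ref{lem:gwminteract}). For the arithmetic clauses $p_1 + p_2$ and $p_1 \cdot p_2$, I would use the induction hypothesis to get $\lamalgint{p_1}_\varvaluation \geqwm \lamalgint{p_1}_\varvaluationb$ and likewise for $p_2$, and then appeal to the weak monotonicity of addition and multiplication on $(\N, >, \geq)$ — that is, $a \geq a'$ and $b \geq b'$ imply $a + b \geq a' + b'$ and $a \cdot b \geq a' \cdot b'$. These are weakly monotonic functionals in the sense of Example~\ref{exa:wmf}-style reasoning, so the inequalities compose.

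The interesting case, and what I expect to be the main obstacle, is the application clause $\avar_i(p_1,\ldots,p_m)$, where $\avar_i : \btype_1 \typepijl \ldots \typepijl \btype_m \typepijl \abasetype$ and each $p_j \in \Pol^{\btype_j}(X)$. Here two ingredients must interact. First, by hypothesis $\varvaluation(\avar_i) \geqwm \varvaluationb(\avar_i)$ in $\WM_{\btype_1 \ftypepijl \ldots \ftypepijl \btype_m \ftypepijl \abasetype}$, which by definition of $\geqwm$ on functional types means $\varvaluation(\avar_i)(\vec{a}) \geqwm \varvaluationb(\avar_i)(\vec{a})$ for every tuple of arguments $\vec{a}$. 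Second, I need the arguments $p_j$ to themselves be weakly monotonic and to satisfy the monotonicity inequality across $\varvaluation$ and $\varvaluationb$; the subtlety is that $p_j \in \Pol^{\btype_j}(X)$ for a possibly functional type $\btype_j$, so $p_j$ may itself be an abstraction $\fatlambda \bvar.q$. I would therefore strengthen the induction to range over $\Pol^\btype(X)$ for all types $\btype$ simultaneously, proving that each such $p_j$ interprets to an element of $\WM_{\btype_j}$ (needed so that $\varvaluation(\avar_i)$ can be applied to it, since $\varvaluation(\avar_i)$ only accepts weakly monotonic arguments) and that it is monotone across the two valuations. Combining these, the key step is that since $\varvaluation(\avar_i)$ is weakly monotonic it preserves $\geqwm$ in each argument, giving $\varvaluation(\avar_i)(\lamalgint{\vec{p}}_\varvaluation) \geqwm \varvaluation(\avar_i)(\lamalgint{\vec{p}}_\varvaluationb)$, and then the componentwise functional-type inequality on $\avar_i$ itself gives $\varvaluation(\avar_i)(\lamalgint{\vec{p}}_\varvaluationb) \geqwm \varvaluationb(\avar_i)(\lamalgint{\vec{p}}_\varvaluationb)$; transitivity of $\geqwm$ (again Lemma~\ref{lem:gwminteract}) chains these. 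The care needed in tracking the abstraction case of $\Pol^{\atype \ftypepijl \btype}(X)$ — where one opens the abstraction, adds the bound variable to the set, and re-applies the hypothesis with an extended valuation — is where the routine but delicate bookkeeping lies, so I would set up the generalised statement over all $\Pol^\btype(X)$ from the outset rather than trying to patch it in afterwards.
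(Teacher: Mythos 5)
Your plan is correct: all three cases go through, and the strengthened induction hypothesis (ranging over $\Pol^\btype(X)$ for all types $\btype$ simultaneously, carrying both membership in $\WM_\btype$ and the inequality across a pair of pointwise-related valuations) is exactly what is needed to close the application and abstraction cases. However, your route differs from the paper's in how the inductive steps are discharged. The paper also inducts on the structure of $p$, but it never manipulates pairs of valuations: in the composite cases it writes the functional in question as the interpretation of a simply-typed $\lambda$-term under a valuation sending fresh variables to functionals already known to be weakly monotonic --- for instance, $\fatlambda \vec{\avar}.p_1 + p_2$ is obtained as $\lamalgint{\abs{\vec{\bvar}}{\app{\app{A}{(\app{F_1}{\vec{\bvar}})}}{(\app{F_2}{\vec{\bvar}})}}}_{\varvaluation}$ with $\varvaluation = \{A \mapsto \fatlambda n m.n+m,\ F_i \mapsto \fatlambda \vec{\avar}.p_i\}$ --- and then invokes Lemma~\ref{lem:algintfacts}(\ref{lem:algintfacts:interprete}) (van de Pol's closure result: $\lambda$-term interpretations under weakly monotonic valuations are weakly monotonic) to conclude; the variable-application case is handled by the same trick, and abstractions in $\Pol^\btype$ are opened and recursed into just as you do. So the ``routine but delicate bookkeeping'' you anticipate (two valuations, componentwise orderings at functional types, reflexivity and transitivity from Lemma~\ref{lem:gwminteract}) is precisely what the paper avoids by outsourcing it to Lemma~\ref{lem:algintfacts}; in effect you re-prove the relevant instance of that lemma inside your induction. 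What your approach buys is self-containedness and transparency: it uses nothing beyond the definition of $\WM$ and Lemma~\ref{lem:gwminteract}. What the paper's approach buys is brevity, since the one genuinely nontrivial fact --- closure of weak monotonicity under application and abstraction --- is already available as a black box.
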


\begin{proof}
It is easy to see that $+$ and $\cdot$ are weakly monotonic.
Taking this into account, the lemma follows quickly with induction on
the
size of $p$, using
Lemma~\ref{lem:algintfacts}(\ref{lem:algintfacts:interprete}).
For the variable case,
if $\fatlambda \vec{\bvar}.p_i \in
\Pol^{\btype_i}(\{\vec{\avar}\})$, then $p_i \in \Pol(\{\vec{\avar},
\vec{\bvar}\})$, so the induction hypothesis applies.
\end{proof}

Higher-order polynomials are typically represented in the form
$a_1 + \ldots + a_n$ (with $n \geq 0$), where each $a_i$ is a
\emph{higher-order monomial}: an expression of the form
$b \cdot c_1 \cdots c_m$, where $b \in \N$ and each $c_i$ is either
a base-type variable $\avar$ or a function application
$\avar(\fatlambda \vec{\bvar_1}.p_1,\ldots,\fatlambda
\vec{\bvar_k}.p_k)$ with all $p_j$ higher-order polynomials again.
Examples of higher-order polynomials over the natural numbers are
for instance $0$ and $3+5 \cdot \avar^2 \cdot \bvar + F(37+\avar)$.
To find a \emph{strongly monotonic} functional, it suffices to
include, for all variables, a monomial containing only that variable:

\begin{lemma}\label{lem:strongpolynomial}
Let $P(\avar_1,\ldots,\avar_n)$ be a higher-order polynomial of the
form $p_1(\vec{\avar}) + \ldots + p_m(\vec{\avar})$, where all $p_i(
\vec{\avar})$ are higher-order monomials.  Then $\fatlambda \vec{
\avar}.P(\vec{\avar})$ is strongly monotonic in argument $i$ if there
is some $p_j$ of the form $a \cdot
\avar_i(\vec{b}(\vec{\avar}))$, where $a \in \N^+$.
\end{lemma}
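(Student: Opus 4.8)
The plan is to unfold the definition of strong monotonicity and reduce the claim to a single comparison in the base set. Fix an argument position $i$ and suppose we are given weakly monotonic inputs $N_1,\ldots,N_n$ together with $M_i \gwm N_i$ (with all other arguments held fixed). I must show $P(N_1,\ldots,M_i,\ldots,N_n) \gwm P(N_1,\ldots,N_i,\ldots,N_n)$, which at base type $\abasetype$ just means a strict $>$ in $\basealgebraset$. The idea is to split the sum: write $P = p_j + R$, where $p_j = a \cdot \avar_i(\vec{b}(\vec{\avar}))$ is the distinguished monomial with $a \in \N^+$, and $R$ is the sum of all the remaining monomials.

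\textbf{The two pieces.} First I would handle the remainder $R$. By Lemma~\ref{lem:polynomial:weak}, $\fatlambda \vec{\avar}.R$ is weakly monotonic, so replacing the $i$-th argument by a $\geqwm$-larger one cannot decrease its value: since $M_i \gwm N_i$ gives $M_i \geqwm N_i$ by compatibility (Lemma~\ref{lem:gwminteract}), we get $R(N_1,\ldots,M_i,\ldots,N_n) \geqwm R(N_1,\ldots,N_i,\ldots,N_n)$. Second, and this is the crux, I would show the distinguished monomial $p_j$ increases strictly. Here $p_j$ evaluates to $a \cdot M_i(\vec{b})$ versus $a \cdot N_i(\vec{b})$ at the relevant argument tuple $\vec{b}$. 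Because $M_i \gwm N_i$ in $\WM_{\btype_1 \ftypepijl \ldots \ftypepijl \btype_m \ftypepijl \abasetype}$, the definition of $\gwm$ at functional type gives $M_i(\vec{b}) > N_i(\vec{b})$ in $\basealgebraset$; multiplying by $a \geq 1$ preserves this strict inequality over $\N$.

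\textbf{Combining.} With a strict increase in the $p_j$ component and a non-strict ($\geqwm$) increase in $R$, I combine them by monotonicity of addition: in $(\N,>,\geq)$, if $x > x'$ and $y \geq y'$ then $x+y > x'+y'$. This yields $P(\ldots,M_i,\ldots) > P(\ldots,N_i,\ldots)$ at base type, i.e.\ $\gwm$, which is exactly strong monotonicity in argument $i$. One subtlety to state carefully is that the $\vec{b}(\vec\avar)$ inside the monomial may itself depend on the $i$-th argument; however, since $\avar_i$ only appears here applied to these arguments, and $M_i, N_i$ preserve $\geqwm$, the relevant evaluation uses the \emph{same} argument tuple for the comparison $M_i(\vec b) > N_i(\vec b)$ — I would make sure $\vec b$ is computed from the common surrounding valuation before invoking the functional comparison.

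\textbf{Main obstacle.} I expect the only real care to be needed in the interaction between the functional-type definition of $\gwm$ and the polynomial's argument structure: specifically, verifying that when $\avar_i$ occurs applied to arguments that themselves may mention $\avar_i$, the strict comparison still goes through cleanly. The clean resolution is that $\gwm$ at arrow type is defined pointwise ($f \gwm g$ iff $f(x) \gwm g(x)$ for \emph{all} $x$), so instantiating at the particular tuple $\vec b$ suffices and no monotonicity of $\vec b$ in $\avar_i$ is required for the \emph{strict} step — that dependence is absorbed into the weakly monotonic remainder argument. The rest is routine arithmetic in $\N$.
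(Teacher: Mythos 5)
Your overall decomposition --- isolate the distinguished monomial $p_j$, handle the sum $R$ of the remaining monomials by weak monotonicity via Lemma~\ref{lem:polynomial:weak}, and combine with strict monotonicity of $+$ over $\N$ --- is exactly the structure of the paper's proof. The gap is in the step you yourself call the crux: the strict comparison for $p_j$. You claim $p_j$ ``evaluates to $a \cdot M_i(\vec{b})$ versus $a \cdot N_i(\vec{b})$ at the relevant argument tuple $\vec{b}$'', computed ``from the common surrounding valuation''. No such common valuation exists: the two evaluations of $p_j$ assign different values ($M_i$ resp.\ $N_i$) to $\avar_i$, and since the arguments $\vec{b}(\vec{\avar})$ may themselves contain $\avar_i$, the two applications use \emph{different} tuples. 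Concretely, for $P(F) = 2 \cdot F(F(0))$ and $M \gwm N$ one must compare $2 \cdot M(M(0))$ with $2 \cdot N(N(0))$, not $2 \cdot M(b)$ with $2 \cdot N(b)$ for a single $b$. Your proposed escape --- that the dependence of $\vec{b}$ on $\avar_i$ ``is absorbed into the weakly monotonic remainder argument'' --- cannot work: the remainder $R$ consists of the \emph{other} monomials and has nothing to do with the arguments occurring inside $p_j$ itself.

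The missing piece is a bridging step inside $p_j$, which is precisely what the paper's proof supplies. Writing $\vec{N} = (N_1,\ldots,N_i,\ldots,N_n)$ and $\vec{N'} = (N_1,\ldots,M_i,\ldots,N_n)$ with $M_i \gwm N_i$: first, since the $b_k$ are polynomials, Lemma~\ref{lem:polynomial:weak} together with $M_i \geqwm N_i$ gives $\vec{b}(\vec{N'}) \geqwm \vec{b}(\vec{N})$ componentwise, so weak monotonicity of the functional $M_i$ yields $M_i(\vec{b}(\vec{N'})) \geq M_i(\vec{b}(\vec{N}))$; second, the pointwise definition of $M_i \gwm N_i$, instantiated at the single tuple $\vec{b}(\vec{N})$, gives $M_i(\vec{b}(\vec{N})) > N_i(\vec{b}(\vec{N}))$. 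Chaining these two inequalities and multiplying by $a \geq 1$ gives $p_j(\vec{N'}) > p_j(\vec{N})$. (The paper does the same with the roles of the two functionals swapped: weak monotonicity of the unprimed functional on the changed tuple, then the pointwise strict step.) So your instinct that the strict comparison happens at one fixed tuple is correct, but only \emph{after} an additional weak-monotonicity step on the applied functional aligns the two tuples; as written, your proof omits that step and compares the wrong quantities.
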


\begin{proof}
Let $\avar_i \gwm \avar_i'$, so also $\avar_i \geqwm \avar_i'$ (since
$> \: \subseteq \: \geq$).
Let $\vec{\avar} := \avar_1,\ldots,\avar_i,\ldots,\avar_l$ and
$\vec{\avar'} := \avar_1,\ldots,\avar_i',\ldots,\avar_l$.
All $p_k$ are weakly monotonic by Lemma~\ref{lem:polynomial:weak},
so $p_k(\vec{\avar}) \geqwm p_k(\vec{\avar'})$.  Since $p_j(\vec{
\avar}) \gwm p_j(\vec{\avar'})$ and $+$ is strongly monotonic,
indeed $P(\vec{\avar}) \gwm P(\vec{\avar'})$.
\end{proof}

\begin{example}\label{ex:rulesremove}
For rule removal on the AFS $\hshuffle$ from
Ex.~\ref{ex:hshuffledef}, consider the interpretation: \\
\phantom{i}
$
\begin{array}{rclrcl}
\constvaluation(\append) & = & \fatlambda n m.n+m &
\constvaluation(\cons) & = & \fatlambda n m.n+m+3 \\
\constvaluation(\reverse) & = & \fatlambda n.n+1 &
\constvaluation(\nil) & = & 0 \\
\constvaluation(\hshuffle) & = & \fatlambda F n.2n+F(0)+nF(n)+1 &
\constvaluation(@^\atype) & = & \fatlambda f n \vec{m}.f(n,\vec{m}) +
  n(\vec{0})\ (**) \\
\end{array}
$
(**) Here, $n(\vec{0})$ is the ``lowest value'' function from
Ex.~\ref{exa:wmf}.  With this interpretation, which is a
strongly monotonic polynomial interpretation by
Lemma~\ref{lem:strongpolynomial}, all rules are oriented with
$\geqterm$, and the two $\hshuffle$ rules and the $\reverse(\nil)$
one even with $\gterm$.
Only for the main shuffle rule this is non-trivial to see; here we
have the constraint:
$F(h+t+3) + tF(h+t+3) + F(h+t+3) + [h+hF(h+t+3)+3F(h+t+3)] + 2 > %\\
F(h)+tF(t+1) + F(t+1)
$.
This holds by weak monotonicity of $F$: since $h+t+3 \geq h$ always
holds, we must have $F(h+t+3) \geq F(h)$ as well, and similarly we see
that $tF(h+t+3) \geq tF(t+1)$ and $F(h+t+3) \geq F(t+1)$.
\end{example}

\section{Automation}\label{sec:automation}

To demonstrate that the approach is automatable, we have made a
proof-of-concept implementation of polynomial interpretations in the
higher-order termination tool \wanda.  The implementation only tries
simple parametric shapes, does not use heuristics, and is limited
to second-order AFSs -- a limitation which excludes but 5 out of the
156 higher-order benchmarks in the current \emph{termination problem
database (TPDB)},\footnote{See
  \url{http://termination-portal.org/wiki/TPDB} for
  details on this standard database.}
as the class of second-order systems is very common.\footnote{The
restriction to second-order systems is not essential, but it makes
the code easier in a number of places: we can avoid representing
function-polynomials $\fatlambda \vec{\avar}.P(\vec{\avar})$, stick
to simple interpretation shapes, and we do not have $\max$ in
the left-hand side of constraints.  Mostly, the
restriction is present because of the low number of available
benchmarks of order 3 or higher, which makes it hard to select
suitable interpretation shapes, and not initially worth the added
implementation effort.}
Even with this minimal implementation, the combination of polynomial
interpretations with dependency pairs can handle about 75\% of the
TPDB.

To find polynomial interpretations automatically, \wanda\ uses the
following steps:
\begin{enumerate}
\item assign every function symbol a higher-order
  polynomial with \emph{parameters} as coefficients;
\item for all requirements $l \geqorgterm r$ and $l \geqterm r$,
  calculate $\algintc{l}$ and $\algintc{r}$ as a function on parameters
  and variables -- this gives constraints $P_i \geqorg Q_i$
  and $P_i \geq Q_i$;
\item introduce a parameter $o_i$ for all constraints of the
  form $P_i \geqorg Q_i$, and replace these constraints by
  $P_i \geq Q_i + o_i$; if we also introduce the constraints
  $o_1 + \ldots + o_n \geq 1$ then, when all constraints are
  satisfied, at least one $\geqorg$ constraint is strictly
  oriented;
\item simplify the constraints until they no longer contain
  variables;
\item impose maximum values
  on
  the search space of the parameters
  and use a non-linear constraint solver to find a solution for the
  constraints.
\end{enumerate}

\noindent
These steps are detailed below,
with
an AFS rule for the
function $\map$ as a running
example.

\subsection{Choosing Parametric Polynomial Interpretations}
\label{subsec:wanda:polynomial:choose}

The module for polynomial interpretations
in \wanda\ is called in three
contexts: rule removal, the dynamic dependency pair framework and
the static dependency pair framework.  In the first case, function
interpretations must be strongly monotonic, in the second case they
have to satisfy a subterm property, and in the third there are no
further restrictions.

To start, every function symbol $\afun : [\atype_1 \times \ldots
\times \atype_n] \decpijl \atype_{n+1} \typepijl \ldots \typepijl
\atype_m \typepijl \abasetype \in \F$ is assigned a function of the
form $\fatlambda \avar_1 \ldots \avar_m.p_1 + p_2 + a$, where $a$ is
a parameter and:
\begin{itemize}
\item $p_1$ has the form $a_1 \cdot \avar_1(0,\ldots,0) + \ldots +
  a_m \cdot \avar_m(0,\ldots,0)$, where the $a_i$ are parameters
  (this is well-typed because we work in a second-order
  system);
  \begin{itemize}
  \item in the rule removal setting, we add
    requirements: $a_1 \geq 1, \ldots, a_n \geq 1$;
  \item in the dynamic dependency pairs setting, we add
    requirements: $a_{n+1} \geq 1, \ldots, a_m \geq 1$.
  \end{itemize}
\item $p_2 = q_1 + \ldots + q_k$, where each $q_j$ has the form
  $c_j \cdot \avar_{i_1} \cdots \avar_{i_k} \cdot \avar_j(
  \avar_{i_1}, \ldots, \avar_{i_k}) + d_j \cdot
  \avar_j(\avar_{i_1}, \ldots, \avar_{i_k})$, with $c_j,d_j$ parameters,
  the $\avar_{i_l}$ first-order variables, and $\avar_j$ a
  higher-order variable; every combination of a higher-order variable
  with first-order variables
  occurs.\footnote{\label{foo:nonlin:shape}In case the constraint
    solver does not find a solution for this interpretation shape,
    \wanda\ additionally includes non-linear monomials
    $c_{i,j} \cdot \avar_{i} \cdot \avar_{j}$ (where $i < j$)
    without functional variables in the
    parametric higher-order polynomials and tries again. In general,
    here one can use arbitrary parametric polynomials.}
\end{itemize}

\noindent
We must also choose an interpretation of $@^\atype$ for all
types.  Rather than using a parametric interpretation,
we
observe that
application occurs mostly on the
right-hand side of
constraints.
There, we
often have (sub-)terms $\app{F}{\aterm_1} \cdots \aterm_n$ with $F$ a
free variable; on the left-hand side, such subterms do not occur,
nor can we have applications headed by an abstraction or bound
variable (in a second-order system, bound variables have
base type).  Only applications of the form $\app{\afun(\aterm_1,
\ldots,\aterm_n)}{\aterm_{n+1}} \cdots \aterm_m$ occur on the left;
since function symbols usually have a base type as output type, this
is a rare situation.
Thus, we fix the interpretation of $@^\atype$ for all types to be as
small as possible.  Note that we must have
$\constvaluation(@^\atype) \geqwm \fatlambda f n.f(n)$ by
Theorem~\ref{thm:weakredpair}, and
$\constvaluation(@^\atype)$ may have to be strongly monotonic, or
satisfy a subterm property.
\begin{itemize}
\item in the rule removal setting, $\constvaluation(@^\atype) =
  \fatlambda f n \vec{m}.f(n,\vec{m})+n(\vec{0})$;
\item in the dynamic dependency pairs setting, $\constvaluation(
  @^\atype) = \fatlambda f n \vec{m}.\max(f(n,\vec{m}),n(\vec{0}))$;
\item in the static dependency pairs setting, $\constvaluation(
  @^\atype) = \fatlambda f n \vec{m}.f(n,\vec{m})$.
\end{itemize}
In the rule removal setting this choice together with the constraints
on the parameters guarantees that all $\constvaluation(\afun)$ are
strongly monotonic in the arguments required by the definition of an
extended monotonic algebra and Theorem~\ref{thm:strongredpair}.
In the dynamic dependency pairs setting, we obtain the required
subterm property as demonstrated in Example~\ref{ex:dynamicdp}.
Moreover, in this setting always $\algintc{\app{\afun(
\aterm_1,\ldots,\aterm_n)}{\aterm_{n+1}} \cdots \aterm_m} =
\max(\constvaluation(\afun)(\algintc{\aterm_1},\ldots,
\algintc{\aterm_n},\algintc{\aterm_{n+1}},\ldots,
\linebreak
\algintc{\aterm_m}),
\algintc{\aterm_{n+1}}(\vec{0}),\ldots,\algintc{\aterm_m}(\vec{0})) =
\constvaluation(\afun)(\algintc{\aterm_1},\ldots,\algintc{\aterm_m})$
by the restriction on the parameters.  Thus, although we now also need
to deal with the
$\max$-operator, it will only ever occur on
the right-hand side of a constraint!  Since this avoids the need for
conditional constraints as used in~\cite{fuh:gie:mid:sch:thi:zan:08}
(without losing any power), it both simplifies the automation and
creates smaller constraints.

\smallskip
From these parametric higher-order polynomials, we calculate the
interpretations of terms, and simplify the resulting
higher-order polynomials into a sum of monomials.
For the constraints $l \geqorgterm r$, in general we use constraints
$\algintc{l} \geq \algintc{r} + o$ for
some fresh \emph{bit} $o$ (a parameter whose value ranges over
$\{0,1\}$), and require that
the sum
of these bits is positive.

\begin{example}[Running Example]\label{ex:map}
To
demonstrate the technique,
consider rule removal on the recursive rule of the common $\map$ example,
which gives the constraint
$\map(F,\cons(h,t)) \geqorgterm \linebreak
\cons(\app{F}{h},\map(F,t))$.
We assign:
$
\constvaluation(\cons) =
  \fatlambda n m.a_1 \cdot n + a_2 \cdot m + a_3$ and
$\constvaluation(\map) =
  \fatlambda f n.a_4 \cdot f(0) + a_5 \cdot n +
  a_6 \cdot n \cdot f(n) +
  a_7$.\footnote{To ease presentation, in contrast to \wanda\ here we
    do not use an addend $a_i \cdot f(n)$ for $\map$.}
This leads to the following constraints:
\begin{itemize}
\item $a_1,a_2,a_4,a_5 \geq 1,\, o_1 \geq 1$ (we could also
  immediately replace $o_1$ by 1).
\item $a_7 +
  a_3 \cdot a_5 +
  a_1 \cdot a_5 \cdot h +
  a_2 \cdot a_5 \cdot t +
  a_4 \cdot F(0) +
  a_1 \cdot a_6 \cdot h \cdot F(a_1 \cdot h + a_2 \cdot t + a_3) +
  a_2 \cdot a_6 \cdot t \cdot F(a_1 \cdot h + a_2 \cdot t + a_3) +
  a_3 \cdot a_6 \cdot F(a_1 \cdot h + a_2 \cdot t + a_3)
  \geq
  a_3 +
  a_2 \cdot a_7 +
  o_1 +
  a_1 \cdot h +
  a_2 \cdot a_5 \cdot t +
  a_2 \cdot a_4 \cdot F(0) +
  a_1 \cdot F(h) +
  a_6 \cdot t \cdot F(t)$
\end{itemize}
\end{example}

\subsection{Simplifying Polynomial Requirements}
\label{subsec:wanda:polynomial:simplify}

We obtain requirements that contain variables as well as
parameters; they should be read as ``there exist $a_i,o_k$ such that
for all $h,t,F$ the inequalities hold''.
To avoid dealing with
claims over all
possible numbers or functions
we simplify the requirements until they
contain no more variables.
To a large extent,
these simplifications
correspond to the ones used with automations of polynomial
interpretations for first-order rewriting~\cite{con:mar:tom:urb:05},
but
higher-order variables in
function application
present an extra difficulty.  To deal with
application of higher-order variables, we
will use Lemma~\ref{lem:wanda:polynomial:nasty}:

\begin{lemma}\label{lem:wanda:polynomial:nasty}
Let $F$ be a weakly monotonic functional
and all $p,q,p_i,q_i,s_i,r_i$ polynomials.
\begin{enumerate}
\item\label{it:wanda:polynomial:nasty:product}
  $F(r_1,\ldots,r_k) \cdot p \geq F(s_1,\ldots,s_k) \cdot q$ if
  $r_1 \geq s_1, \ldots, r_k \geq s_k, p \geq q$.
\item\label{it:wanda:polynomial:nasty:complete}
  $r_1 \cdot p_1 + \ldots + r_n \cdot p_n \geq s_1 \cdot q_1 +
  \ldots + s_m \cdot q_m$ if there are $e_{i,j}$ for
  $1 \leq i \leq n,1 \leq j \leq m$ with:
  \begin{enumerate}
  \item for all $i$: $r_i \geq e_{i,1} + \ldots + e_{i,m}$;
  \item for all $j$: $e_{1,j} + \ldots + e_{n,j} \geq s_j$;
  \item either $e_{i,j} = 0$ or $p_i \geq q_j$.
  \end{enumerate}
\end{enumerate}
\end{lemma}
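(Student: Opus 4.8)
The plan is to read every stated inequality $u \geq v$ pointwise: for each valuation $\varvaluation$ of the variables (and of $F$), the value of $u$ is at least that of $v$ in $\N$, where for function-valued subexpressions this is the order $\geqwm$. The two standing facts I will use are that every higher-order polynomial over $\N$ is non-negative, so multiplication on $\N$ is monotonic in both arguments (if $a \geq a' \geq 0$ and $b \geq b' \geq 0$ then $ab \geq a'b'$, via $ab \geq a'b \geq a'b'$), and that $F$ is weakly monotonic. I fix an arbitrary valuation and argue with natural-number values throughout.

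For part~(\ref{it:wanda:polynomial:nasty:product}): from $r_1 \geq s_1, \ldots, r_k \geq s_k$ and weak monotonicity of $F$ (the defining property of $\WM$, applied componentwise) I obtain $F(r_1,\ldots,r_k) \geqwm F(s_1,\ldots,s_k)$; since $F$ is fully applied, both sides have base type, i.e.\ are naturals, so this reads $F(\vec r) \geq F(\vec s)$ in $\N$. Combining with $p \geq q$ and monotonicity of multiplication yields $F(\vec r) \cdot p \geq F(\vec s) \cdot q$. As the valuation was arbitrary, the inequality holds.

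For part~(\ref{it:wanda:polynomial:nasty:complete}) I use a transportation-style argument in which $e_{i,j}$ is the ``flow'' from summand $i$ on the left to summand $j$ on the right, and chain three inequalities:
\[
\sum_{i=1}^n r_i p_i \;\geq\; \sum_{i=1}^n \Bigl(\sum_{j=1}^m e_{i,j}\Bigr) p_i \;=\; \sum_{i,j} e_{i,j}\, p_i \;\geq\; \sum_{i,j} e_{i,j}\, q_j \;=\; \sum_{j=1}^m \Bigl(\sum_{i=1}^n e_{i,j}\Bigr) q_j \;\geq\; \sum_{j=1}^m s_j q_j .
\]
The first step uses condition~(a) together with $p_i \geq 0$; the middle inequality is proved termwise, since for each pair $(i,j)$ either $e_{i,j} = 0$ (so both sides vanish) or $p_i \geq q_j$ (and $e_{i,j} \geq 0$), giving $e_{i,j} p_i \geq e_{i,j} q_j$ by condition~(c); the last step uses condition~(b) together with $q_j \geq 0$. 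The two equalities are mere regroupings of the double sum.

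The proofs are essentially bookkeeping: the only points requiring care are that non-negativity of every subpolynomial, guaranteed precisely because we work over $\N$, is what licenses each appeal to multiplication-monotonicity, and that in part~(\ref{it:wanda:polynomial:nasty:product}) weak monotonicity must be applied at the correct, possibly higher-order, level of the arguments $r_i,s_i$ before collapsing to base-type values. The main potential source of error is the double-sum manipulation in part~(\ref{it:wanda:polynomial:nasty:complete}); handling the middle inequality termwise, rather than trying to sum first, keeps the case split on $e_{i,j} = 0$ clean.
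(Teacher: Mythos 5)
Your proposal is correct and follows essentially the same route as the paper: part~(1) by weak monotonicity of $F$ plus monotonicity of multiplication over $\N$, and part~(2) by the same three-step chain $\sum_i r_i p_i \geq \sum_{i,j} e_{i,j} p_i \geq \sum_{i,j} e_{i,j} q_j \geq \sum_j s_j q_j$ using (a), the termwise case split from (c), and (b). Your explicit remarks on non-negativity licensing the multiplication steps merely spell out what the paper leaves implicit.
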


\begin{proof}
\textbf{(\ref{it:wanda:polynomial:nasty:product})} holds by weak
monotonicity of $F$.
As for \textbf{(\ref{it:wanda:polynomial:nasty:complete})},
$r_1 \cdot p_1 + \ldots + r_n \cdot p_n \geq \sum_{i=1}^n
\sum_{j = 1}^m e_{i,j} \cdot p_i$ by \textbf{(a)}, and since
$e_{i,j} = 0$ whenever not $p_i \geq q_j$ by \textbf{(c)},
$\sum_{i=1}^n \sum_{j = 1}^m e_{i,j} \cdot p_i \geq
\sum_{i=1}^n \sum_{j = 1}^m e_{i,j} \cdot q_j =
\sum_{j=1}^m \sum_{i = 1}^n e_{i,j} \cdot q_j$.
Using \textbf{(b)},
$\sum_{j=1}^m \sum_{i = 1}^n e_{i,j} \cdot q_j \geq
\sum_{j = 1}^m s_j \cdot q_j$ as required.
\end{proof}

\noindent
Lemma~\ref{lem:wanda:polynomial:nasty}, together with some
observations used in
the first-order case,
supplies
the theory we need to simplify the requirements to constraints which
do not contain any variables.
Here, a ``component'' of a monomial $a_1 \cdots a_n$ is any of
the $a_i$ (but $p$ is not a component of $F(p)$).

\begin{enumerate}
\item\label{it:wanda:polynomial:simplify}\label{it:wanda:polynomial:trivial}
  Do standard simplifications on the
  constraints, for
  instance replacing
  $3 \cdot F(n) \geq F(n) + a_1 \cdot n$ by $2 \cdot F(n) \geq a_1
  \cdot n$ and
  $p + B \cdot p$ by $(B + 1) \cdot p$ if $B$ is a
  \emph{known} constant, and removing monomials $0 \cdot p$.
  Remove constraints $p \geq 0$ and $p \geq p$ which always hold.
\item\label{it:wanda:polynomial:zero}
  Split constraints $0 \geq p_1 + \ldots + p_n$ into the $n$
  constraints $0 \geq p_i$.
  Remove constraints $0 \geq a$ where $a$ is a single parameter, and
  replace $a$ by $0$ everywhere else. \\
  \emph{This is valid because, in the natural numbers, $0 \geq a$
  implies $a = 0$, and $0+\ldots+0 = 0$.}
\item\label{it:wanda:polynomial:maxsplit}
  Replace constraints $P \geq Q[\max(r,s)]$ by the two constraints
  $P \geq Q[r]$ and $P \geq Q[s]$. \\
  \emph{This is valid because for any valuation $Q[\max(r,s)]$ equals
    $Q[r]$ or $Q[s]$.}
\item\label{it:wanda:polynomial:splitsum}
  Given a constraint $p_1 + \ldots + p_n \geq p_{n+1} + \ldots + p_m$
  where some, but not all, of the monomials $p_i$ contain a
  component $\avar$ or $\avar(\vec{q})$ for some fixed variable
  $\avar$, let $A$ contain the indices $i$ of those monomials $p_i$
  which have $\avar$ or $\avar(\vec{q})$.
  Replace the constraint by the two constraints
  $\sum_{i \in A, i \leq n}\ p_i \geq \sum_{i \in A, i > n}\ p_i$ and
  $\sum_{i \notin A, i \leq n}\ p_i \geq \sum_{i \notin A, i > n}\ 
  p_i$.
  For example, splitting on $n$, the constraint
  $3 \cdot n \cdot m + a_2 \cdot F(a_3 \cdot n + a_4) \geq 2 + a_7
  \cdot m + F(n)$ is split into $3 \cdot n \cdot m \geq 0$ and $a_2
  \cdot F(a_3 \cdot n + a_4) \geq 2 + a_7 \cdot m + F(n)$;
  subsequently, splitting on $F$, the latter is split into $a_2 \cdot
  F(a_3 \cdot n + a_4) \geq F(n)$ and $0 \geq 2 + a_7 \cdot m$. \\
  \emph{This is valid because $p_1 + p_2 \geq q_1 + q_2$ certainly
  holds if $p_1 \geq q_1$ and $p_2 \geq q_2$.}
\item\label{it:wanda:polynomial:divide}
  If all non-zero monomials on either side of a constraint have a
  component $\avar$, ``divide out'' $\avar$.  For example,
  replace the constraint $a_1 \cdot n + n \cdot n \cdot f(a_3,n) \geq
  n + a_3 \cdot n$ by $a_1 + n \cdot f(a_3,n) \geq 1 + a_3$, and
  replace $0 \geq a_5 \cdot m$ by $0 \geq a_5$. \\
  \emph{This is valid because $p \cdot n \geq q \cdot n$
  holds if $p \geq q$ (cf.\ the absolute positiveness
  criterion~\cite{hon:jak:98:1}).}
\item\label{it:wanda:polynomial:product}
  Replace a constraint
  $s \cdot \avar_1(p_{1,1},\ldots,p_{1,k_1}) \cdots \avar_n(p_{n,1},
  \ldots,p_{n,k_n}) \geq s \cdot \avar_1(q_{1,1},\ldots,q_{1,k_1})
  \cdots$\\$\avar_n(q_{n,1},\ldots,q_{n,k_n})$
  by the constraints $s \cdot p_{i,j} \geq s \cdot q_{i,j}$ for all
  $i,j$. \\
  \emph{This is valid by Lemma~\ref{lem:wanda:polynomial:nasty}(\ref{it:wanda:polynomial:nasty:product})
  and case analysis whether $s = 0$ or not.}
\item\label{it:wanda:polynomial:nasty}
  Let $p_1,\ldots,p_n,q_1,\ldots,q_m$ be monomials of the form
  $\avar_1(\vec{r_1}),\ldots,\avar_k(\vec{r_k})$, for fixed
  $\avar_1,\ldots,\avar_k$.
  Replace a constraint $r_1 \cdot p_1 + \ldots + r_n \cdot p_n
    \geq s_1 \cdot q_1 + \ldots + s_m \cdot q_m$ with $n,m \geq 1$ by
    the following constraints, where the $e_{i,j}$ are fresh
    parameters: \\
    \phantom{WW} for $1 \leq i \leq n$: $r_i \geq e_{i,1} + \ldots +
      e_{i,m}$ \\
    \phantom{WW} for $1 \leq j \leq m$: $e_{1,j} + \ldots + e_{n,j}
      \geq s_j$ \\
    \phantom{WW} for $1 \leq i \leq n,\ 1 \leq j \leq m$: $e_{i,j}
      \cdot p_i \geq e_{i,j} \cdot q_j$ (which can
      be handled with clause~\ref{it:wanda:polynomial:product})\\
    \emph{This is valid by
    Lemma~\ref{lem:wanda:polynomial:nasty}(\ref{it:wanda:polynomial:nasty:complete}).}\footnote{In
      the cases where $n = 1$ or $m = 1$, some of these parameters are
      unnecessary; for instance, if $n = 1$, we can safely fix
      $e_{1,j} = s_j$ for all $j$.  Our actual implementation uses
      a few of such special-case optimisations.}
\end{enumerate}

\noindent
It is easy to see that while a constraint still has variables
in it, we can apply clauses to
simplify or split it (taking into account
that $\max$ does not appear in the left-hand side of a
constraint), and that the clauses also terminate on a system without
variables.
These simplifications are not complete: for example,
a
universally valid
constraint $F(n) \cdot n \geq F(1) \cdot n$
is split into constraints $n \geq n$ (which holds), and
$n \geq 1$ (which does not).

\begin{example}
Let us simplify the constraints from Example~\ref{ex:map}.  First,
using clause~\ref{it:wanda:polynomial:splitsum} to group monomials by
their variables, we obtain:
\begin{itemize}
\item $a_1,a_2,a_4,a_5,o_1 \geq 1$
\item $a_7 + a_3 \cdot a_5 \geq a_3 + a_2 \cdot a_7 + o_1$
\item $a_1 \cdot a_5 \cdot h \geq a_1 \cdot h$
\item $a_2 \cdot a_5 \cdot t \geq a_2 \cdot a_5 \cdot t$
\item $a_4 \cdot F(0) +
       a_3 \cdot a_6 \cdot F(a_1 \cdot h + a_2 \cdot t + a_3)
       \geq
       a_2 \cdot a_4 \cdot F(0) +
       a_1 \cdot F(h)$
\item $a_1 \cdot a_6 \cdot h \cdot F(a_1 \cdot h + a_2 \cdot t + a_3)
  \geq 0$
\item
 $a_2 \cdot a_6 \cdot t \cdot F(a_1 \cdot h + a_2 \cdot t + a_3) \geq
  a_6 \cdot t \cdot F(t)$
\end{itemize}

The $4^{\mathrm{th}}$ and $6^{\mathrm{th}}$
requirements are trivial and can be removed with
clause~\ref{it:wanda:polynomial:trivial}.  After dividing
away the non-functional variables using
clause~\ref{it:wanda:polynomial:divide} we have the following
constraints left:
\begin{itemize}
\item $a_1,a_2,a_4,a_5,o_1 \geq 1$
\item $a_7 + a_3 \cdot a_5 \geq a_3 + a_2 \cdot a_7 + o_1$
\item $a_1 \cdot a_5 \geq a_1$
\item $a_4 \cdot F(0) +
       a_3 \cdot a_6 \cdot F(a_1 \cdot h + a_2 \cdot t + a_3)
       \geq
       a_2 \cdot a_4 \cdot F(0) +
       a_1 \cdot F(h)$
\item
 $a_2 \cdot a_6 \cdot F(a_1 \cdot h + a_2 \cdot t + a_3) \geq
  a_6 \cdot F(t)$
\end{itemize}
The first three are completely simplified.
Clauses~\ref{it:wanda:polynomial:nasty} and~\ref{it:wanda:polynomial:product} replace the last two
constraints by:
\begin{itemize}
\item
$a_4 \geq e_{1,1} + e_{1,2}$,\quad
$a_3 \cdot a_6 \geq e_{2,1} + e_{2,2}$,\quad
$e_{1,1} + e_{2,1} \geq a_2 \cdot a_4$,\quad
$e_{1,2} + e_{2,2} \geq a_1$
\item
$e_{1,1} \cdot 0 \geq e_{1,1} \cdot 0$,\quad
$e_{1,2} \cdot 0 \geq e_{1,2} \cdot h$
\item
$e_{2,1} \cdot (a_1 \cdot h + a_2 \cdot t + a_3) \geq e_{2,1} \cdot 0$,\quad
$e_{2,2} \cdot (a_1 \cdot h + a_2 \cdot t + a_3) \geq e_{2,2} \cdot h$
\item
  $a_2 \cdot a_6 \geq k_{1,1}$, \quad $k_{1,1} \geq a_6$, \quad
  $k_{1,1} \cdot (a_1 \cdot h + a_2 \cdot t + a_3) \geq k_{1,1} \cdot t$
\end{itemize}

Using clauses~\ref{it:wanda:polynomial:simplify},
\ref{it:wanda:polynomial:splitsum} and
\ref{it:wanda:polynomial:divide},
we can simplify the constraints
further,
and obtain:
\vspace{-1ex}
\[
\begin{array}{ccc}
\begin{array}{rcl}
a_1,a_2,a_4,a_5,o_1 & \geq & 1 \\
a_7 + a_3 \cdot a_5 & \geq & a_3 + a_2 \cdot a_7 + o_1 \\
a_1 \cdot a_5 & \geq & a_1 \\
a_4 & \geq & e_{1,1} \\
\end{array} &
\begin{array}{rcl}
a_3 \cdot a_6 & \geq & e_{2,1} + e_{2,2} \\
e_{1,1} + e_{2,1} & \geq & a_2 \cdot a_4 \\
e_{2,2} & \geq & a_1 \\
e_{2,2} \cdot a_1 & \geq & e_{2,2} \\
\end{array} &
\begin{array}{rcl}
a_2 \cdot a_{6} & \geq & k_{1,1} \\
k_{1,1} & \geq & a_6 \\
k_{1,1} \cdot a_2 & \geq & k_{1,1} \\
& & \\
\end{array}
\end{array}
\]
\end{example}
\vspace{-1ex}

\noindent
Thus, using a handful of clauses, the requirements are simplified to
a number of constraints with parameters over the natural numbers.  In
the actual \wanda\ implementation a few small optimisations
are used; for example, some simplifications are combined,
and if $\max(r,s)$ occurs more than once in the same polynomial, all
occurrences are replaced by $r$ or $s$ at the same time.
However, these optimisations make no fundamental difference to the
method.

After imposing bounds on the search space,
we can solve the resulting non-linear constraints using standard
SAT- \cite{fuh:gie:mid:sch:thi:zan:07} or SMT-based
\cite{bor:luc:oli:rod:rub:12} techniques
(\wanda\ uses a SAT encoding similar to
\cite{fuh:gie:mid:sch:thi:zan:07} with the solver \textsf{MiniSAT}
\cite{een:sor:04} as back-end).
If the problem is satisfiable, the solver returns
values for all parameters, so it is easy to see which requirements have
been oriented with $>$.
For $\map$, the solver could provide for example the
solution $a_7 = e_{2,1} = 0$,
$a_1 = a_2 = a_3 = a_4 = a_6 = e_{1,1} = e_{2,2} = k_{1,1} = o_1 = 1$,
and $a_5 = 2$. This results in the interpretation
$\constvaluation(\cons) =
\fatlambda nm.n+m+1$ and $\constvaluation(\map) = \fatlambda
fn.f(0) + 2 \cdot n + n\cdot f(n)$.

\section{Experiments}\label{sec:experiments}
For an empirical evaluation of our contributions, we conducted a
number of experiments with our implementation in \wanda\ using an Intel
Xeon 5140 CPU with four cores
at 2.33~GHz (cf.\ also
  \url{http://aprove.informatik.rwth-aachen.de/eval/HOPOLO/} for
  details on the evaluation).
As underlying benchmark set, we used the 156 examples from the
higher-order category of the TPDB version 8.0.1 together with Examples
\ref{ex:hshuffledef} and \ref{ex:dynamicdp}. \wanda\ invokes the
SAT solver \minisat\ \cite{een:sor:04} and the first-order termination
\pagebreak
prover \aprove\ \cite{gie:sch:thi:06} as
back-ends.
As in the Termination Competition, we imposed a 60 second timeout
per example.

The module for polynomial interpretation is called potentially twice
with different polynomial shapes, as described at the start of
Section~\ref{subsec:wanda:polynomial:choose}
(cf.\ Footnote~\ref{foo:nonlin:shape}).  The search space
for the parameters is $\{0,\ldots,3\}$.
Our first experiment is designed to analyse the impact of polynomial
interpretations coupled with a higher-order dependency pair
framework.

\begin{figure}[h]\small
\vspace{-8pt}
\begin{center}
\begin{tabular}{r||c|c|c|c|c}
Configuration     & YES & NO & MAYBE & TIMEOUT & Avg.\ time\\\hline
\wanda\ full      & 124 &  9 &   23  &    2    & 3.19 $s$ \\
\wanda\ no poly   & 119 &  9 &   30  &    0    & 2.40 $s$ \\
\wanda\ no horpo  & 118 &  9 &   28  &    3     & 3.59 $s$ \\
\end{tabular}
\end{center}
\vspace{-17pt}
\caption{\small \emph{Experimental results of full \wanda\ with and
without polynomials or horpo}}
\vspace{-10pt}
\label{fig:expfull}
\end{figure}

\noindent
Fig.~\ref{fig:expfull} shows the results of \emph{\wanda\ full},
which includes both polynomial interpretations and HORPO,
the other main class of orderings implemented by \wanda\ (other than
that, \wanda\ only uses the subterm criterion as an ordering-based
technique).  They are compared to versions of \wanda\ where either
polynomials or HORPO are disabled.
Although \wanda\ already scored highest in the Termination Competition
of 2011,
adding the contributions of this paper
gives an additional 5
examples on the benchmark set.
It is interesting to note that even without HORPO, \wanda\ with
polynomials can
still show termination of 118 examples.

Using the contributions of \cite{fuh:kop:11:1}, \wanda\ delegates the
first-order part of a higher-order rewrite system to the
first-order termination tool \aprove, where it is
commonplace to use polynomial interpretations. The setup of our
second experiment deals with the impact of higher-order polynomial
interpretations if \wanda\ does not use a first-order tool.
\begin{figure}[h]\small
\vspace{-8pt}
\begin{center}
\begin{tabular}{r||c|c|c|c|c}
Configuration                           & YES & NO & MAYBE & TIMEOUT & Avg.\ time\\\hline
\wanda\ no~\cite{fuh:kop:11:1} full     & 118 &  9 & 29 &  2 & 2.32 $s$ \\ % proverid 8786
\wanda\ no~\cite{fuh:kop:11:1} no poly  & 107 &  9 & 42 &  0 & 1.09 $s$ \\ % proverid 8779
\wanda\ no~\cite{fuh:kop:11:1} no horpo & 111 &  9 & 35 &  3 & 2.89 $s$ \\ % proverid 8777
\end{tabular}
\end{center}
\vspace{-15pt}
\caption{\small \emph{Experimental results of \wanda\ without
  first-order back-end}}
\vspace{-10pt}
\label{fig:expnofo}
\end{figure}

\noindent
Fig.~\ref{fig:expnofo} juxtaposes the results of \wanda\ without the
first-order prover \aprove\ in three configurations.
We see that
if we disable the 
first-order back-end, the increase in power by polynomial
interpretations goes up from 5 examples in the first experiment to 11
examples in the second.
Thus,
the gain of using a first-order tool can
at least partially be compensated by using native higher-order
polynomial interpretations.

Our third experiment investigates the impact of higher-order
polynomial interpretations if no dependency pairs are used (which also
excludes first-order termination tools).
Here we compare to the version of HORPO implemented in \wanda.
\begin{figure}[h]\small
\vspace{-8pt}
\begin{center}
\begin{tabular}{r||c|c|c|c|c}
Configuration      & YES & NO & MAYBE & TIMEOUT & Avg.\ time\\\hline
Rule Removal both  & 76  &  9 &   70  &  3 & 3.60 $s$ \\ % proverid 8775
Rule Removal horpo & 69  &  9 &   80  &  0 & 1.01 $s$ \\ % proverid 8773
Rule Removal poly  & 47  &  9 &   97  &  5 & 3.64 $s$ \\ % proverid 8772
\end{tabular}
\end{center}
\vspace{-15pt}
\caption{\small \emph{Experimental results of \wanda\ with rule removal
(and without dependency pairs)}}
\vspace{-10pt}
\label{fig:expnodp}
\end{figure}

\noindent
Using just rule removal, HORPO clearly trumps polynomial
interpretations.  However, in part this may be due to the limited
choice in interpretation shapes this first implementation of
polynomial interpretations supports.

\paragraaf{Discussion}
Analysing the termination problem database, it is perhaps not
surprising that the gain from using polynomial interpretations in the
first experiment is not larger: the majority of the benchmarks which
\wanda\ cannot already handle is non-terminating, or not known to be
terminating (for example, state-of-the-art first-order tools cannot
prove termination of the first-order part).  For others,
type-conscious methods such as \emph{accessibility} (see
e.g.~\cite{bla:jou:rub:08:1}) are required; the method described in
this paper ignores differences in base types.
For cases where polynomial interpretations are needed, but only for
the (truly) first-order part, passing this first-order part
\cite{fuh:kop:11:1} to a modern first-order tool already suffices --
as is evident by comparing the numbers in the first and second
experiments.
With higher-order polynomial interpretations, we have gained three out
of the remaining seven benchmarks.

\section{Conclusion}\label{sec:conclusion}

In this paper, we have extended the termination method of weakly
monotonic algebras to the class of AFSs,
simplifying definitions and
adding the theory to use algebras with rule removal and dependency
pairs; some efforts towards this were previously made
in~\cite{kop:raa:11:1}, but only for the setting of dynamic
dependency pairs.  Then, we introduced the class of higher-order
polynomial interpretations, and discussed how suitable interpretations
can be found automatically.  The implementation of polynomial
interpretations increases the power of \wanda\ by a respectable five
benchmarks, including the two examples in this paper.

Thus, weakly monotonic algebras form an elegant method for proving
termination by hand and, as demonstrated by the implementation in
\wanda\ and the results of the experiments, a feasible automatable
termination method as well.

\paragraaf{Future Work}
We have by no means reached the limit of what can be achieved with
this technique: we might consider different interpretation shapes,
possibly coupled with heuristics to determine a suitable shape.  Or
we may go beyond polynomials; we could for instance use $\max$ in
function interpretations as done in
e.g.~\cite{fuh:gie:mid:sch:thi:zan:08}, or (for a truly higher-order
alternative), use repeated function application; this leads to
interpretations like $\fatlambda nmf.\max(m,f^n(m))$.\footnote{The
$\max$ is essential in this interpretation because $\fatlambda nmf.
f^n(m)$ is not weakly monotonic.  A case analysis whether $m \geq
f(m)$ or $f(m) \geq m$ shows that $\fatlambda nmf.\max(m,f^n(m))$
\emph{is} weakly monotonic.}

Another alley to explore
is to combine polynomial
interpretations with type interpretations: rather than collapsing all
base types into one, we might \emph{translate} them,
e.g.\ 
mapping a base type $\mathsf{funclist}$ to $\o \typepijl \o$.
\wanda\ already does this in very specific cases,
and one could
simultaneously
search for polynomial interpretations
and for a
type interpretation -- this
could
parallel
the
search for type orderings
in
implementations of
the recursive path
ordering~\cite{bla:jou:rub:08:1}.

Moreover, in the first-order world, there are many more applications
of monotonic algebras, e.g.\ matrix, arctic, rational, real and integer
interpretations\ldots
There is no obvious reason why these methods cannot be lifted to
the higher-order case as well!

\bibliographystyle{plain}
\bibliography{references}

% The appendix only goes to the report version.
\confreport{}{
\appendix

\renewenvironment{lemma}[1]{
  \vspace{1ex}
  \noindent$\blacktriangleright$ \textbf{#1.}\it}{
  \vspace{1ex}
}

\newpage

\section{Appendix}

Something that is worth noting, in particular when considering the
proofs in the following section, is that we use the mathematical
definition of a function as a set of pairs; a function is specified
entirely by its domain and values.
Thus, if $F$ and $G$ are both functions in some
$\WM_\atype$, and $F(x) = G(x)$ for all $x$ in their domain, then
$F = G$.
We will use the notation $\fatlambda \avar.P(\avar)$ for a function
that takes one argument $\avar$, and returns $P(\avar)$.

\subsection{Changing the Definition of $\geq$.}

In Section~\ref{subsec:wmf} we used slightly different restrictions
on the orderings $>$ and $\geq$ than in~\cite{pol:96:1}: van de Pol
required that $\geq$ was the reflexive closure of $>$, while we merely
require that $\geq$ is compatible with $>$.  Is it certain we can do
this?

The answer is yes, it is.  In fact, of the theory in~\cite{pol:96:1}
we use but three results: Lemma~3.2.1 (the substitution lemma),
Lemma~4.1.4 (which gives facts about the interaction of $\gwm$ and
$\geqwm$) and Proposition~4.1.5 (which states that $\algintc{\aterm}$
is a weakly monotonic functional if $\constvaluation$ and
$\varvaluation$ map to weakly monotonic functions).

Lemma~3.2.1 is completely independent of the definition of $\WM$ (in
fact, $\WM$ is only defined a chapter later).
Lemma~4.1.4 is used only for Lemma~\ref{lem:gwminteract}, which we
will rederive below (in fact, Lemma~\ref{lem:gwminteract} as it is
already is not literally what appears in~\cite{pol:96:1}).
As for Proposition~4.1.5, it uses only reflexivity of $\geqwm$, which
remains valid, and the following facts:
\begin{itemize}
\item if $f \in \WM_{\atype \ftypepijl \btype}$ and $x \in
  \WM_\atype$, then $f(x) \in \WM_\btype$;
\item if $f \geqwm_{\atype \ftypepijl \btype} g$ and $x \in
  \WM_\atype$, then $f(x) \geqwm g(x)$;
\item if $f \gwm_{\atype \ftypepijl \btype} g$ and $x \in
  \WM_\atype$, then $f(x) \gwm g(x)$;
\item if $f \in \WM_{\atype \ftypepijl \btype}$ and $x,y \in
  \WM_\atype$ and $x \geqwm y$ then $f(x) \geqwm f(y)$.
\end{itemize}
These facts are all immediately clear from the definition of $\WM$,
and they do not depend on the way $>$ and $\geq$ interact.

\medskip \noindent
The one thing we do have to see is that Lemma~\ref{lem:gwminteract}
stays valid.  Recall:
\begin{itemize}
\item $\geq$ is a quasi-ordering, so a reflexive and transitive
  binary relation;
\item $>$ is a well-founded partial ordering, so a transitive binary
  relation, such that there is no infinite decreasing sequence $a_1 >
  a_2 > \ldots$ ($>$ must also be non-reflexive and anti-symmetric,
  but this is implied by well-foundedness);
\item $>$ and $\geq$ are compatible, so either $a > b \geq c$ implies
  $a > c$, or $a \geq b > c$ implies $a > c$;
\item $\basealgebraset$ is non-empty.
\end{itemize}

\begin{lemma}{Lemma~\ref{lem:gwminteract}}
For all types $\atype$ the following statements hold:
\begin{itemize}
\item $\gwm_\atype$ is well founded;
\item both $\gwm_\atype$ and $\geqwm_\atype$ are transitive;
\item $\geqwm_\atype$ is reflexive (always $n \geqwm_\atype n$);
\item $\gwm_\atype$ and $\geqwm_\atype$ are compatible.
\end{itemize}
\end{lemma}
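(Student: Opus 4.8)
The plan is to prove all four properties simultaneously by induction on the structure of the type $\atype$. The base case $\atype = \abasetype \in \setsorts$ is immediate: there $\gwm_\abasetype$ and $\geqwm_\abasetype$ are literally $>$ and $\geq$ on $\basealgebraset$, so the four claims are exactly the hypotheses that $\geq$ is a quasi-ordering, $>$ is a well-founded partial order, and $>$ and $\geq$ are compatible. For the inductive step I would take a functional type $\atype \typepijl \btype$ and assume the four properties already hold for $\btype$. The key observation driving the whole argument is that $\gwm_{\atype \ftypepijl \btype}$ and $\geqwm_{\atype \ftypepijl \btype}$ are defined \emph{pointwise} from $\gwm_\btype$ and $\geqwm_\btype$, so most properties transfer directly.

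Concretely, for reflexivity I would note that $f(x) \geqwm_\btype f(x)$ for every $x \in \WM_\atype$ by the induction hypothesis, whence $f \geqwm_{\atype \ftypepijl \btype} f$. For transitivity of $\gwm_{\atype \ftypepijl \btype}$, from $f \gwm g \gwm h$ one gets $f(x) \gwm_\btype g(x) \gwm_\btype h(x)$ for each $x$, and transitivity of $\gwm_\btype$ yields $f(x) \gwm_\btype h(x)$; the argument for $\geqwm$ is identical. For compatibility, whichever of the two inclusions $\gwm_\btype \cdot \geqwm_\btype \subseteq \gwm_\btype$ or $\geqwm_\btype \cdot \gwm_\btype \subseteq \gwm_\btype$ holds at $\btype$ lifts pointwise to the \emph{same} inclusion at $\atype \typepijl \btype$, so compatibility is preserved (and in the same direction, which matters since only one direction is assumed at the base).

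The one genuinely delicate point is well-foundedness of $\gwm_{\atype \ftypepijl \btype}$, and I expect this to be the main obstacle. Suppose, towards a contradiction, there were an infinite descending chain $f_1 \gwm f_2 \gwm \cdots$ in $\WM_{\atype \ftypepijl \btype}$. I would like to evaluate this chain at a single point and reduce to well-foundedness at $\btype$, but this manoeuvre only works if $\WM_\atype$ is \emph{non-empty} — otherwise the pointwise definition of $\gwm_{\atype \ftypepijl \btype}$ would make the relation vacuous and meaningless to descend in. Here I would invoke the constant-function construction from Example~\ref{exa:wmf}(\ref{ex:wmf:constants}): since $\basealgebraset \neq \emptyset$, any $n \in \basealgebraset$ gives $n_\atype \in \WM_\atype$, so $\WM_\atype$ is inhabited for every type. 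Fixing such an $x \in \WM_\atype$, the chain yields $f_1(x) \gwm_\btype f_2(x) \gwm_\btype \cdots$, an infinite descending chain in $\WM_\btype$, contradicting well-foundedness of $\gwm_\btype$ from the induction hypothesis.

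Thus the crux is not the order-theoretic bookkeeping, which is a routine pointwise transfer through the inductive definition, but the side-condition that each argument domain $\WM_\atype$ be inhabited; once non-emptiness is secured via the constant functionals, well-foundedness follows by the same pointwise reduction as the other properties.
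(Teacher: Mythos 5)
Your proof follows essentially the same route as the paper's: induction on the structure of the type, pointwise transfer of reflexivity, transitivity and compatibility (including the observation that the \emph{direction} of compatibility must be preserved, which the paper also makes explicit), and well-foundedness by evaluating a hypothetical infinite descending chain at a single point of $\WM_\atype$ --- which, exactly as you identify, makes inhabitation of $\WM_\atype$ the crux. The one place you diverge is how that inhabitation is secured, and there is a subtle circularity in your version. You cite Example~\ref{exa:wmf}(\ref{ex:wmf:constants}) as an external fact, but the paper's one-line justification of that example leans on the multi-argument characterisation of weak monotonicity stated just before Lemma~\ref{lem:gwminteract}, and the ``if'' direction of that characterisation already uses reflexivity of $\geqwm$ at functional types --- one of the claims you are in the middle of proving. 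The paper avoids this by strengthening the induction hypothesis instead: it proves non-emptiness of $\WM_\btype$ for all subtypes simultaneously with the four order-theoretic properties, constructing the constant function one abstraction level at a time ($g := \fatlambda n.a$ with $a \in \WM_\ctype$), whose weak monotonicity needs only reflexivity of $\geqwm_\ctype$ supplied by the induction hypothesis; note this route requires the induction hypothesis at the \emph{domain} type $\atype$ as well, whereas you assume it only for the codomain $\btype$.

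The gap is easily repaired: either fold non-emptiness into the induction as the paper does, or observe that constant functions can be shown weakly monotonic by a self-contained induction on the type in which every comparison that arises unfolds pointwise to $n \geq n$, so that only reflexivity of $\geq$ on $\basealgebraset$ (an assumption on the well-founded set, not a conclusion of the lemma) is ever used. With either patch your argument is complete and coincides with the paper's proof.
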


\begin{proof}
We prove the lemma with induction on the type $\atype$, and in
addition that weakly monotonic functionals exist for all types (a
fact we will need for the other statements).
Assume (IH) that for all subtypes $\btype$ of $\atype$,\ 
$\gwm_\btype$ is well founded, both $\gwm_\btype$ and $\geqwm_\btype$
are transitive, $\geqwm_\btype$ is reflexive and $\gwm_\btype$ and
$\geqwm_\btype$ are compatible \emph{in the same way as $>$ and
$\geq$}, and that $\WM_\btype$ is non-empty.  ``In the same way''
means that if $> \cdot \geq\ \subseteq\ >$ then also $\gwm_\btype
\cdot \geqwm_\btype\ \subseteq\ \gwm_\btype$, and otherwise
$\geqwm_\btype \cdot \gwm_\btype\ \subseteq\ \gwm_\btype$.

For $\atype$ a base type, we immediately have well-foundedness,
transitivity, reflexivity, compatibility and non-emptiness, by the
assumptions on $>,\ \geq$ and $\basealgebraset$.
For $\atype = \btype \typepijl \ctype$, we obtain:

\textbf{$\gwm_\atype$ is well founded:}
Suppose, towards a contradiction, that $f_1 \gwm_\atype f_2
\gwm_\atype f_3 \gwm_\atype \ldots$.  Let $a \in \WM_\btype$
(such $a$ exists by IH).  By definition of $\gwm_\atype$, also
$f_1(a) \gwm_\ctype f_2(a) \gwm_\ctype f_3(a) \gwm_\ctype \ldots$,
contradicting well-foundedness of $\gwm_\ctype$.

\textbf{$\geqwm_\atype$ is transitive:}
Suppose $f \geqwm_\atype g \geqwm_\atype h$.  Then for all
$x \in \WM_\btype$ we have $f(x) \geqwm_\ctype g(x) \geqwm_\ctype
h(x)$ by definition of $\geqwm_\atype$, so by the induction
hypothesis $f(x) \geqwm_\ctype h(x)$ for all $x \in \WM_\btype$.
This exactly means that $f \geqwm_\atype h$.

\textbf{$\gwm_\atype$ is transitive:}
Same as for $\geqwm_\atype$.

\textbf{$\geqwm_\atype$ is reflexive:}
Let $f \in \WM_\atype$.  Then $f \geqwm_\atype f$ iff for all $x \in
\WM_\btype$: $f(x) \geqwm_\ctype f(x)$.  But this holds by
reflexivity of $\geqwm_\ctype$ (IH).

\textbf{$\gwm_\atype$ and $\geqwm_\atype$ are compatible:}
Suppose $> \cdot \geq$ is included in $>$; the case when $\geq \cdot
>$ is included in $>$ is symmetric.
Let $f,g,h \in \WM_\atype$ and suppose $f \gwm_\atype g \geqwm_\atype
h$.  Then for all $x \in \WM_\atype$ we have $f(x) \gwm_\ctype g(x)
\geqwm_\ctype h(x)$, so $f(x) \gwm_\ctype h(x)$ by (IH), and therefore
$f \gwm_\atype h$.

\textbf{$\WM_\atype$ is non-empty:}
Let $a \in \WM_\ctype$; the function $g := \fatlambda n : \WM_\btype.
a$ is in $\WM_\atype$ because if $x \geqwm_\btype y$, then
$g(x) = a \geqwm a = g(y)$ by reflexivity of $\geqwm_\ctype$ (IH).
\end{proof}

\subsection{The Max function.}
The $\max_\atype$ function is defined for all $\atype$ in
Example~\ref{exa:wmf}.  The other two parts of this example (the
constant function and maximum function) are presented as weakly
monotonic functionals already in~\cite{pol:96:1}, but the $\max$
function is new, so it falls on us to demonstrate its weak
monotonicity.

\begin{lemma}{Example~\ref{exa:wmf}(\ref{ex:wmf:max})}
$\max_\atype \in \WM_{\atype \ftypepijl \abasetype \ftypepijl \atype}$
for all types $\atype$ and base types $\abasetype$.
\end{lemma}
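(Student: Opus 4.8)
Recall that for a type $\btype = \btype_1 \typepijl \ldots \typepijl \btype_k \typepijl \abasetype$ (with $\abasetype \in \setsorts$), Example~\ref{exa:wmf}(\ref{ex:wmf:max}) defines $\max_\btype(f,m) = \fatlambda \avar_1 \ldots \avar_k.\max(f(\avar_1,\ldots,\avar_k),m)$, where $\max$ on the base set is the ordinary maximum of two elements of $\basealgebraset$. My plan is to prove weak monotonicity by induction on $k$, the arity of $\btype$, using the characterisation stated just after the definition of weakly monotonic functionals: a function $\fatlambda \avar_1 \ldots \avar_n.P(\vec{\avar})$ is weakly monotonic exactly when the base-type value $P$ is monotonic in all arguments simultaneously with respect to $\geqwm$.

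**Base of the functional we must verify.** The claim asserts $\max_\btype \in \WM_{\btype \ftypepijl \abasetype \ftypepijl \btype}$, so I must verify three nested monotonicity properties at once: (i) weak monotonicity in the argument $f \in \WM_\btype$; (ii) weak monotonicity in the argument $m \in \WM_\abasetype = \basealgebraset$; and (iii) that for each fixed $f,m$ the resulting object $\fatlambda \avar_1 \ldots \avar_k.\max(f(\vec{\avar}),m)$ genuinely lies in $\WM_\btype$. First I would handle the innermost claim (iii), proceeding by induction on $k$: the key step is that if $\avar_i \geqwm \avar_i'$ componentwise, then $f(\vec{\avar}) \geqwm f(\vec{\avar'})$ by weak monotonicity of $f \in \WM_\btype$, and since the ordinary $\max$ on $\basealgebraset$ satisfies $\max(a,m) \geq \max(a',m)$ whenever $a \geq a'$ (as noted in Example~\ref{exa:wmf}(\ref{ex:wmf:max})), the composite $\max(f(\vec{\avar}),m)$ preserves $\geqwm$ in each $\avar_i$.

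**The two outer monotonicities.** For (i) and (ii), suppose $f \geqwm_\btype f'$ and $m \geq m'$ in $\basealgebraset$. I must show $\max_\btype(f,m) \geqwm_\btype \max_\btype(f',m')$, which by the definition of $\geqwm_{\btype}$ unfolds (through $k$ applications) to showing $\max(f(\vec{\avar}),m) \geq \max(f'(\vec{\avar}),m')$ for all $\vec{\avar}$. Since $f \geqwm f'$ gives $f(\vec{\avar}) \geqwm f'(\vec{\avar}) $ pointwise, i.e.\ $f(\vec{\avar}) \geq f'(\vec{\avar})$ at base type, and $m \geq m'$, the required inequality follows from the two-sided monotonicity of the base $\max$: $\max(a,m) \geq \max(a',m')$ whenever $a \geq a'$ and $m \geq m'$. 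This elementary fact about $\max$ on $(\basealgebraset,\geq)$ is the workhorse, and it needs only that $\geq$ is a quasi-ordering (reflexivity and transitivity), which holds for any well-founded set in the sense of Definition~4.1.1.

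**Main obstacle.** The genuine subtlety is bookkeeping rather than depth: I must keep straight that $\WM_\abasetype = \basealgebraset$ so the second argument $m$ lives at base type, while the first argument ranges over the full functional type $\btype$, and that the output again lands in $\WM_\btype$. The induction on $k$ is needed precisely to descend through the $\avar_i$ one layer at a time; at each layer I reduce a statement about $\geqwm_{\btype_i \typepijl \cdots}$ to a statement one type-level lower, bottoming out at the base-type inequality for $\max$. I expect no surprises, since the only nontrivial ingredient is the componentwise monotonicity of the base $\max$ together with weak monotonicity of $f$, both already available; the writeup just has to invoke the characterisation of weak monotonicity uniformly and track which argument sits at which type.
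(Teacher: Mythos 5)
Your proof is correct and takes essentially the same route as the paper's: both reduce the claim, via the characterisation of weak monotonicity stated above Lemma~\ref{lem:gwminteract}, to the two-sided monotonicity of the base $\max$ (i.e.\ $\max(a,m) \geq \max(a',m')$ when $a \geq a'$ and $m \geq m'$) together with the componentwise inequality $f(\vec{\avar}) \geqwm f'(\vec{\avar'})$, the latter established by the same layer-by-layer induction descending through the arguments $\avar_1,\ldots,\avar_k$. The only cosmetic difference is that the paper verifies simultaneous monotonicity in all arguments ($f$, the base-type argument and the $\avar_i$) in a single pass, whereas you split the verification into the curried pieces (i)--(iii); the mathematical content is identical.
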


\begin{proof}
Write $\atype = \atype_1 \typepijl \ldots \typepijl
\atype_k \typepijl \abasetype$ with $\abasetype \in \setsorts$.
Using the observation above Lemma~\ref{lem:gwminteract}, it suffices
if $\max_\atype(f,n,m_1,\ldots,m_k) \geq \max_\atype(f',n',m_1',
\ldots,m_k')$ if $f \geqwm f',\ n \geqwm n'$ and each $m_i \geqwm
m_i'$ (where $f,f' \in \WM_\atype, n,n' \in \basealgebraset$ and each
$m_i,m_i' \in \WM_{\atype_i}$).

But
$\max_\atype(f,n,m_1,\ldots,m_k) = \max(f(m_1,\ldots,m_n),n)$ and
$\max_\atype(f',n',m_1',\ldots,m_k') = \max(f'(m_1',\ldots,m_k'),n')$;
certainly $n \geq n'$; by definition of $\max$ we are done if
$f(m_1,\ldots,m_k) \geqwm f'(m_1',\ldots,m_k')$.

By induction on $i$ we have: $f(m_1,\ldots,m_i) \in \WM_{\atype_{i+1}
\ftypepijl \ldots \ftypepijl \atype_k \ftypepijl \abasetype}$ for all
$0 \leq i \leq k$ and $f(m_1',\ldots,m_i') \geqwm f'(m_1',\ldots,
m_i')$:
\begin{itemize}
\item in the base case ($i = 0$), $f(m_1,\ldots,m_i) = f \geqwm f' =
  f'(m_1',\ldots,m_i')$ by assumption
\item if $i = j+1$, then $f(m_1,\ldots,m_i) = f(m_1,\ldots,m_j)(m_i)$,
  and since $f(m_1,\ldots,m_j) \in \WM_{\atype_i \ftypepijl
  \ldots \ftypepijl \atype_k \ftypepijl \abasetype}$ by the induction
  hypothesis, and $m_i \in \WM_i$ by assumption, this functional is
  in $\WM_{\atype_{i+1} \ftypepijl \ldots \ftypepijl \atype_k
  \ftypepijl \abasetype}$ by definition
\item
  $f(m_1,\ldots,m_j)(m_i) \geqwm
  f(m_1,\ldots,m_j)(m_i')$ by the definition of $\geqwm$, and since
  $f(m_1,\ldots,m_j) \geqwm f'(m_1',\ldots,m_j')$ by the induction
  hypothesis, we obtain: \\
  $f(m_1,\ldots,m_i) \geqwm f(m_1,\ldots,m_j)(m_i') \geqwm
  f'(m_1',\ldots,m_j')(m_i') = f'(m_1',\ldots,m_i')$
\end{itemize}
Taking $i := k$ this provides what we need.
\end{proof}

\subsection{Weak and Strong Monotonicity: claims in
Lemma~\ref{lem:wmf:afs}(\ref{lem:wmf:afsm:context}) and
Theorem~\ref{thm:strongredpair}}

\begin{lemma}{Lemma~\ref{lem:wmf:afs}(\ref{lem:wmf:afsm:context})}
Let $(\basealgebra, \constvaluation)$ be a weakly monotonic algebra.
If $\algintc{\aterm} \geqwm \algintc{\bterm}$ for all valuations
$\varvaluation$, then $\algintc{C[\aterm]} \geqwm \algintc{C[\bterm]}$
for all valuations $\varvaluation$ and contexts $C$.
\end{lemma}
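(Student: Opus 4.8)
The plan is to fix terms $\aterm, \bterm$ satisfying the hypothesis $\algint{\aterm}_{\constvaluation, \varvaluationb} \geqwm \algint{\bterm}_{\constvaluation, \varvaluationb}$ for all valuations $\varvaluationb$, and then establish $\algintc{C[\aterm]} \geqwm \algintc{C[\bterm]}$ for all $\varvaluation$ by induction on the structure of the context $C$. I would keep the universal quantification over valuations as part of the inductive statement, since the abstraction case forces applying the induction hypothesis at a modified valuation.

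First, for the base case $C = \Box_\atype$ I have $C[\aterm] = \aterm$ and $C[\bterm] = \bterm$, so the claim is exactly the hypothesis. Next come the cases where the hole sits directly below a function symbol or an application; these I would handle uniformly via weak monotonicity of the relevant interpretation. For $C = \afun(\cterm_1, \ldots, C', \ldots, \cterm_k)$ with the hole in argument $i$, I have $\algintc{C[\aterm]} = \constvaluation(\afun)(\ldots, \algintc{C'[\aterm]}, \ldots)$; the induction hypothesis gives $\algintc{C'[\aterm]} \geqwm \algintc{C'[\bterm]}$, the remaining arguments coincide and so are related by reflexivity of $\geqwm$ (Lemma~\ref{lem:gwminteract}), and since $\constvaluation(\afun)$ is weakly monotonic it preserves $\geqwm$ argumentwise. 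The two application cases $C = \app{C'}{\dterm}$ and $C = \app{\dterm}{C'}$ are the same argument with $\constvaluation(@^\btype)$ in place of $\constvaluation(\afun)$.

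The abstraction case $C = \abs{\avar}{C'}$ carries the real content. Here $\algintc{C[\aterm]} = \fatlambda n.\algint{C'[\aterm]}_{\constvaluation, \varvaluation \cup \{\avar \mapsto n\}}$, and both this functional and its $\bterm$-counterpart lie in the relevant $\WM_{\atype \typepijl \btype}$ by Lemma~\ref{lem:wmf:afs}(\ref{lem:wmf:afsm:wm}), so $\geqwm$ at this type unfolds to the pointwise comparison. Thus it suffices to show, for every argument $n$, that $\algint{C'[\aterm]}_{\constvaluation, \varvaluation \cup \{\avar \mapsto n\}} \geqwm \algint{C'[\bterm]}_{\constvaluation, \varvaluation \cup \{\avar \mapsto n\}}$, which is precisely the induction hypothesis for $C'$ instantiated at the valuation $\varvaluation \cup \{\avar \mapsto n\}$.

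I expect the main obstacle to be exactly this interplay between binding and quantification, and it is the reason the lemma must be stated for all valuations rather than a single one: a context may capture free variables of $\aterm$ and $\bterm$, so when the hole passes under $\abs{\avar}{}$ the value of $\avar$ is no longer supplied by the outer valuation but ranges over all of $\WM_\atype$. Because both the hypothesis on $\aterm, \bterm$ and the inductive statement quantify over every valuation, the induction hypothesis stays applicable at each $\varvaluation \cup \{\avar \mapsto n\}$, which is what lets the bound variable's value pass through. Folding these pointwise inequalities back through the definition of $\geqwm_{\atype \typepijl \btype}$ then closes the case and completes the induction.
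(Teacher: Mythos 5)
Your proposal is correct and follows essentially the same route as the paper's own proof: induction on the structure of $C$, with the base case immediate, the function-symbol and application cases handled by weak monotonicity of $\constvaluation(\afun)$ resp.\ $\constvaluation(@^\atype)$, and the abstraction case closed by applying the induction hypothesis at the extended valuation $\varvaluation \cup \{\avar \mapsto n\}$ and unfolding the pointwise definition of $\geqwm$ at functional types. Your explicit remark on why the statement must quantify over all valuations (variable capture by the context) is exactly the consideration that makes the paper's inductive hypothesis well-formed, so nothing is missing.
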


\noindent
This Lemma has no counterpart in~\cite{pol:96:1}, because van de Pol
does not consider situations where weak monotonicity is sufficient.
Thus, this we derive ourselves.  The proof is an easy induction.

\begin{proof}
By induction on the form of $C$.
The base case ($C = \Box_\atype$) is evident, otherwise suppose (IH)
$\algint{D[\aterm]}_{\constvaluation,\varvaluationb} \geqwm
\algint{D[\bterm]}_{\constvaluation,\varvaluationb}$ for all
valuations $\varvaluationb$.

In the case of an abstraction, $C[] = \abs{\avar}{D[]}$, we have
$\algintc{C[\aterm]} = \fatlambda n.\algint{D[\aterm]}_{
\constvaluation,\varvaluation \cup \{\avar \mapsto n\}}$ and we are
done because, by (IH) and the definition of $\geqwm$ for functions,
this function $\geqwm \fatlambda n.\algint{D[\bterm]}_{
\constvaluation,\varvaluation \cup \{\avar \mapsto n\}} =
\algintc{C[\bterm]}$.

In the case of a function application, $C[] = \afun(\aterm_1,\ldots,
D[],\ldots,\aterm_n)$, we have $\algintc{C[\aterm]} =
\constvaluation(\afun)(\algintc{\aterm_1},\ldots,\algintc{D[\aterm]},
\ldots,\algintc{\aterm_n})$.  Weak monotonicity of $\constvaluation(
\afun)$ implies that if any of the argument $\geqwm$-decreases, then
so does the result.  Thus, by (IH) also $\algintc{C[\aterm]} \geqwm
\constvaluation(\afun)(\algintc{\aterm_1},\ldots,\algintc{D[\bterm]},
\ldots,\algintc{\bterm_1}) = \algintc{C[\bterm]}$.

The cases where $C[] = \app{D[]}{\cterm}$ or $C[] = \app{\cterm}{
D[]}$ are very similar.
\end{proof}

Almost the same, but using strong monotonicity, the following was
stated in Theorem~\ref{thm:strongredpair} as an easily derived result.

\begin{lemma}{Theorem~\ref{thm:strongredpair}(claim)}
Let $(\basealgebra, \constvaluation)$ be an extended monotonic
algebra.
If $\algintc{\aterm} \gwm \algintc{\bterm}$ for all valuations
$\varvaluation$, then $\algintc{C[\aterm]} \gwm \algintc{C[\bterm]}$
for all valuations $\varvaluation$ and contexts $C$.
\end{lemma}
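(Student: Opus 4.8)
The plan is to prove the claim by induction on the structure of the context $C$, mirroring the proof of Lemma~\ref{lem:wmf:afs}(\ref{lem:wmf:afsm:context}) step for step, but replacing every appeal to weak monotonicity of an interpretation function with the corresponding strong monotonicity that an extended monotonic algebra guarantees. Throughout I would keep the induction hypothesis in the form ``whenever $\algintc{\aterm} \gwm \algintc{\bterm}$ for all valuations $\varvaluation$, then $\algint{D[\aterm]}_{\constvaluation,\varvaluationb} \gwm \algint{D[\bterm]}_{\constvaluation,\varvaluationb}$ for all $\varvaluationb$'', so that it can be reapplied under extended valuations when passing through a binder.

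For the base case $C = \Box_\atype$ there is nothing to prove, since $C[\aterm] = \aterm$ and $C[\bterm] = \bterm$ and the conclusion is exactly the hypothesis. For the abstraction case $C = \abs{\avar}{D}$, I would compute $\algintc{C[\aterm]} = \fatlambda n.\algint{D[\aterm]}_{\constvaluation,\varvaluation \cup \{\avar \mapsto n\}}$ and the analogous expression for $\bterm$; applying the induction hypothesis at each valuation $\varvaluation \cup \{\avar \mapsto n\}$ gives the strict inequality at every argument $n$, and since $\gwm$ at a functional type is \emph{by definition} the pointwise order, this yields $\algintc{C[\aterm]} \gwm \algintc{C[\bterm]}$ directly, with no monotonicity of any interpretation function involved.

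The two remaining shapes of $C$ are where strong monotonicity does the work. If $C = \afun(\aterm_1,\ldots,D,\ldots,\aterm_n)$ with the hole in argument $i \leq n$, then $\algintc{C[\aterm]} = \constvaluation(\afun)(\ldots,\algintc{D[\aterm]},\ldots)$; the induction hypothesis gives $\algintc{D[\aterm]} \gwm \algintc{D[\bterm]}$, and strong monotonicity of $\constvaluation(\afun)$ in argument $i$ (available because the algebra is extended monotonic and $i \leq n$) upgrades this to $\gwm$ of the whole term. The application cases $C = \app{D}{\cterm}$ and $C = \app{\cterm}{D}$ are treated identically, now regarding application as the symbol $@^\atype$ and invoking strong monotonicity of $\constvaluation(@^\atype)$ in its first two arguments.

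I expect the only point requiring genuine care to be the abstraction case: one must state the induction hypothesis so that it quantifies over \emph{all} valuations before entering the binder, and one must recall that $\gwm$ at a functional type is the pointwise order, so that establishing strictness separately for each value $n$ of the bound variable is precisely what the definition demands. The function-symbol and application cases, by contrast, are where the stronger hypothesis of strong (rather than weak) monotonicity is actually used, but there the relevant property applies immediately and presents no difficulty.
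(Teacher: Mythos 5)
Your proposal is correct and follows essentially the same route as the paper's own proof: induction on the form of $C$ with the induction hypothesis quantified over all valuations $\varvaluationb$, the pointwise definition of $\gwm$ at functional types handling the abstraction case, and strong monotonicity of $\constvaluation(\afun)$ (in its first $n$ arguments) and of $\constvaluation(@^\atype)$ (in its first two arguments) handling the function-symbol and application cases. The points you flag as requiring care are exactly the ones the paper's proof relies on, so there is nothing to add.
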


\begin{proof}
By induction on the form of $C$.
The base case ($C = \Box_\atype$) is evident, otherwise suppose (IH)
$\algint{D[\aterm]}_{\constvaluation,\varvaluationb} \gwm
\algint{D[\bterm]}_{\constvaluation,\varvaluationb}$ for all
valuations $\varvaluationb$.

In the case of an abstraction, $C[] = \abs{\avar}{D[]}$, we have
$\algintc{C[\aterm]} = \fatlambda n.\algint{D[\aterm]}_{
\constvaluation,\varvaluation \cup \{\avar \mapsto n\}}$ and we are
done because, by (IH) and the definition of $\gwm$ for functions,
this function $\gwm \fatlambda n.\algint{D[\bterm]}_{
\constvaluation,\varvaluation \cup \{\avar \mapsto n\}} =
\algintc{C[\bterm]}$.

In the case of a function application, $C[] = \afun(\aterm_1,\ldots,
D[],\ldots,\aterm_n)$, let $\constvaluation(\afun) = A$; a function
which is strongly monotonic in its first $n$ arguments by assumption.
We have $\algintc{C[\aterm]} =
A(\algintc{\aterm_1},\ldots,\algintc{D[\aterm]},\ldots,
\algintc{\aterm_n})$.  By the induction hypothesis,
$\algintc{D[\aterm]} \gwm \algintc{D[\bterm]}$, so because $A$ is
strongly monotonic in the corresponding argument 
$\algintc{C[\aterm]} \gwm A(\algintc{\aterm_1},\ldots,
\algintc{D[\bterm]},\ldots, \algintc{\bterm_1}) =
\algintc{C[\bterm]}$.

The cases where $C[] = \app{D[]}{\cterm}$ or $C[] = \app{\cterm}{
D[]}$ are very similar, since $@^\atype$ is also strongly monotonic
in its first two arguments.
\end{proof}

\subsection{Example~\ref{ex:dynamicdp}: dealing with dynamic
dependency pairs}

In Example~\ref{ex:dynamicdp}, we made the claim that, choosing
$\constvaluation(
@^{\atype \ftypepijl \btype}) = \fatlambda fn.\max_\btype(f(n),n(
\vec{0}))$ and $\constvaluation(c_j) = 0_{\atype}$ for $c_j :
\atype$, we have $\algintc{\app{\aterm}{\vec{\bterm}}} \geqwm
\algintc{\app{\bterm_i}{\vec{c}}}$.

To see that this holds, consider the following lemma:

\begin{lemma}{Lemma: Interpretation of Application}
for all $n$ we have:
\[\algintc{\app{\aterm}{\bterm_1} \cdots \bterm_n} =
\fatlambda \vec{m}.\max(\algintc{\aterm}(\algintc{\bterm_1},\ldots,
\algintc{\bterm_n},\vec{m}), \algintc{\bterm_1}(\vec{0}),\ldots,
\algintc{\bterm_n}(\vec{0}))\]
\vspace{-6ex}
\end{lemma}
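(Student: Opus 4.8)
The plan is to prove the identity by induction on $n$, unfolding the definition of $\algintc{\app{\cdot}{\cdot}}$ with the interpretation $\constvaluation(@^{\atype \ftypepijl \btype}) = \fatlambda f n.\max_\btype(f(n),n(\vec{0}))$ fixed for the dynamic setting, and then collapsing the nested layers of $\max$ using associativity and commutativity of the base-type maximum. Throughout, I read $g(a_1,\ldots,a_k)$ as the curried application $g(a_1)(a_2)\cdots(a_k)$, and I recall from Example~\ref{exa:wmf}(\ref{ex:wmf:max}) that $\max_\btype(g,m) = \fatlambda \avar_1 \ldots \avar_k.\max(g(\avar_1,\ldots,\avar_k),m)$ for $\btype = \btype_1 \typepijl \ldots \typepijl \btype_k \typepijl \abasetype$.

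For the base case $n = 1$, write $\aterm : \atype \typepijl \btype$ and $\bterm_1 : \atype$, so that $\app{\aterm}{\bterm_1} : \btype$. Unfolding the definition of the interpretation of application gives $\algintc{\app{\aterm}{\bterm_1}} = \constvaluation(@^{\atype \ftypepijl \btype})(\algintc{\aterm},\algintc{\bterm_1}) = \max_\btype(\algintc{\aterm}(\algintc{\bterm_1}),\, \algintc{\bterm_1}(\vec{0}))$. Expanding $\max_\btype$ as above, this is $\fatlambda \vec{m}.\max(\algintc{\aterm}(\algintc{\bterm_1})(\vec{m}),\, \algintc{\bterm_1}(\vec{0})) = \fatlambda \vec{m}.\max(\algintc{\aterm}(\algintc{\bterm_1},\vec{m}),\, \algintc{\bterm_1}(\vec{0}))$, which is exactly the claim for $n = 1$.

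For the inductive step I use left-associativity to write $\app{\aterm}{\bterm_1} \cdots \bterm_{n+1} = \app{s'}{\bterm_{n+1}}$ with $s' = \app{\aterm}{\bterm_1} \cdots \bterm_n$. Applying the induction hypothesis to $\algintc{s'}$, then partially evaluating at $\algintc{\bterm_{n+1}}$ (which consumes the first abstracted argument of the outer $\fatlambda$ and shifts the remaining ones, while leaving the base-type constants $\algintc{\bterm_i}(\vec{0})$ untouched), and finally applying the definition of $@$ once more, produces a term of the form $\fatlambda \vec{m}.\max\bigl(\max(\algintc{\aterm}(\algintc{\bterm_1},\ldots,\algintc{\bterm_{n+1}},\vec{m}),\, \algintc{\bterm_1}(\vec{0}),\ldots,\algintc{\bterm_n}(\vec{0})),\, \algintc{\bterm_{n+1}}(\vec{0})\bigr)$. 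Flattening the two nested maxima via $\max(\max(x,a),b) = \max(x,a,b)$ yields the desired formula listing all $n+1$ lowest values.

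The only place where care is genuinely needed is the bookkeeping of currying in the inductive step: tracking how the extra arguments $\vec{m}$ shift when $\algintc{s'}$ is fed $\algintc{\bterm_{n+1}}$, observing that each $\algintc{\bterm_i}(\vec{0})$ is a base-type constant and so can be moved freely in and out of the outer $\fatlambda \vec{m}$, and correctly merging the $\max$-layer inherited from the induction hypothesis with the fresh one contributed by $@$. Once the types are tracked precisely, everything reduces to the elementary arithmetic of $\max$, so no further obstacle remains.
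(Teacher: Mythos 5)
Your proof is correct and takes essentially the same approach as the paper's: induction on $n$, peeling off the last argument of the application, unfolding the fixed interpretation of $@^\atype$, applying the induction hypothesis to the prefix $\app{\aterm}{\bterm_1}\cdots\bterm_n$, and then flattening the nested maxima. The only cosmetic difference is that the paper starts the induction at $n=0$ (trivial, since functions are determined by their values) while you start at $n=1$; this changes nothing of substance.
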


\begin{proof}
We see this with induction on $n$.

\emph{$n = 0$ (base case):} $\algintc{\aterm} = \fatlambda \vec{m}.
\algintc{\aterm}(\vec{m})$, since functions are defined by their
values.

\emph{$n = k+1$:} $\algintc{\app{\aterm}{\bterm_1} \cdots \bterm_n} =
\max_\btype(\algintc{\app{\aterm}{\bterm_1} \cdots \bterm_k}(
\algintc{\bterm_n}),\algintc{\bterm_n}(\vec{0}))$, which by the
definition of $\max_\btype$ equals
$\fatlambda \vec{m}.\max(\algintc{\app{\aterm}{\bterm_1} \cdots
\bterm_k}(\algintc{\bterm_n},\vec{m}),\algintc{\bterm_n}(\vec{0}))$.
By the induction hypothesis, \\
\indent \indent $\algintc{\app{\aterm}{\bterm_1} \cdots
  \bterm_k}(\algintc{\bterm_n},\vec{m})$ \\
\indent \indent $= [\fatlambda \vec{x}.\max(\algintc{\aterm}(\algintc{
\bterm_1},\ldots,\algintc{\bterm_k},\vec{x}),\algintc{\bterm_1}(
\vec{0}),\ldots,\algintc{\bterm_k}(\vec{0}))](\algintc{\bterm_n},
\vec{m})$ \\
\indent \indent $= \max(\algintc{\aterm}(\algintc{\bterm_1},\ldots,
\algintc{\bterm_n},\vec{m}),\algintc{\bterm_1}(\vec{0}),\ldots,
\algintc{\bterm_k}(\vec{0}))$. \\
Thus, the function we have is exactly: \\
\indent \indent
$\fatlambda \vec{m}.\max(\max(\algintc{\aterm}(\algintc{\bterm_1},
\ldots,\algintc{\bterm_n},\vec{m}),
\algintc{\bterm_1}(\vec{0}),\ldots,\algintc{\bterm_k}(\vec{0})),
\algintc{\bterm_n}(\vec{0}))$ \\
\indent \indent
$= \fatlambda \vec{m}.\max(\algintc{\aterm}(\algintc{\bterm_1},
\ldots,\algintc{\bterm_n},\vec{m}),\algintc{\bterm_1}(\vec{0}),
\ldots,\algintc{\bterm_n}(\vec{0}))$
\end{proof}

Thus we see, if $\app{\aterm}{\vec{\bterm}}$ and $\app{\bterm_i}{
\vec{c}}$ both have base type, then:
\[
\begin{array}{lll}
\algintc{\app{\aterm}{\vec{\bterm}}} & = &
\max(\algintc{\aterm}(\algintc{\bterm_1},\ldots,
  \algintc{\bterm_n}),\algintc{\bterm_1}(\vec{0}),\ldots,
  \algintc{\bterm_n}(\vec{0})) \\
& \geq & \algintc{\bterm_i}(\vec{0}) \\
& = & \max(\algintc{\bterm_i}(\vec{0}),0,\ldots,0) \\
& = & \max(\algintc{\bterm_i}(\algintc{c_1},\ldots,\algintc{c_k}),
  \algintc{c_1}(\vec{0}),\ldots,\algintc{c_k}(\vec{0})) \\
& = & \algintc{\app{\bterm_i}{\vec{c}}} \\
\end{array}
\]

To see that the given interpretation indeed orients all rules:
\[
\begin{array}{lcccccl}
\algintc{\minimum(\avar,\nul)} & = & 0 & \geq & 0 & = &
  \algintc{\nul} \\
\algintc{\minimum(\nul,\avar)} & = & 0 & \geq & 0 & = &
  \algintc{\nul} \\
\algintc{\minimum(\suc(\avar),\suc(\bvar))} & = & 0 & \geq &
  3 \cdot 0 & = & \algintc{\suc(\minimum(\avar,\bvar))} \\
\algintc{\diff(\avar,\nul)} & = & \avar & \geq & \avar & = &
  \algintc{\avar} \\
\algintc{\diff(\nul,\avar)} & = & \avar & \geq & \avar & = &
  \algintc{\avar} \\
\algintc{\diff(\suc(\avar),\suc(\bvar))} & = & 3 \cdot \avar +
  3 \cdot \bvar & \geq & \avar + \bvar & = &
  \algintc{\diff(\avar,\bvar)} \\
\algintc{\gcd(\suc(\avar),\nul)} & = & 3 \cdot \avar & \geq &
  3 \cdot 0 & = & \algintc{\suc(\nul)} \\
\algintc{\gcd(\nul,\suc(\avar))} & = & 3 \cdot \avar & \geq &
  3 \cdot 0 & = & \algintc{\suc(\nul)} \\
\algintc{\gcd(\suc(\avar),\suc(\bvar))} & = & 3 \cdot \avar +
  3 \cdot \bvar & \geq & \avar + \bvar + 3 \cdot 0 & = &
  \algintc{\gcd(\diff(\avar,\bvar),\suc(\minimum(\avar,\bvar)))} \\
\algintc{\build(\nul)} & = & 3 \cdot 0 & \geq & 0 & = &
  \algintc{\nil} \\
\algintc{\build(\suc(\avar))} & = & 9 \cdot \avar & \geq &
  4 \cdot \avar + 3 \cdot \avar & = & \algintc{
  \cons(\abs{\bvar}{\gcd(\bvar,\avar),\build(\avar)})} \\
\algintc{\collapse(\nil)} & = & 0 & \geq & 0 & = & \algintc{\nul} \\
\algintc{\collapse(\cons(F,t))} & = & F(t)+t & \geq &
  \max(F(t),t) & = & \algintc{\collapse(t)} \\
\end{array}
\]

\subsection{Proofs for Section~\ref{sec:polynomial}}

The proofs in Section~\ref{sec:polynomial} were somewhat minimal.
Here follow the complete proofs.

\begin{lemma}{Lemma~\ref{lem:polynomial:weak}}
If $p \in \Pol(\{ \avar_1 : \atype_1, \ldots, \avar_n : \atype_n
\})$, then $\fatlambda \avar_1 \ldots \avar_n.p$
$\in \WM_{\atype_1 \ftypepijl \ldots \ftypepijl
\atype_n \ftypepijl \abasetype}$.
\vspace{-2ex}
\end{lemma}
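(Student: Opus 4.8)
The plan is to prove the statement by induction on the size of the polynomial $p$, keeping the type environment $X$ (and hence the number $n$ of abstracted variables) \emph{variable} rather than fixed; this extra generality is exactly what lets the induction hypothesis reach into the bodies of the functional arguments in the variable case. Throughout I will use the characterisation stated just above Lemma~\ref{lem:gwminteract}: since the codomain $\iota$ is a base type, so that $\WM_\iota = \basealgebraset$, the function $\fatlambda \avar_1 \ldots \avar_n.p$ lies in $\WM_{\atype_1 \ftypepijl \ldots \ftypepijl \atype_n \ftypepijl \abasetype}$ precisely when $p(\vec{N}) \geqwm p(\vec{M})$ whenever each $N_i \geqwm M_i$ (writing $p(\vec{N})$ for $p$ evaluated under $\vec{\avar}:=\vec{N}$). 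It therefore suffices, in each case, to establish this implication.

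For the base case $p = n \in \N$ the function $\fatlambda \vec{\avar}.n$ is the constant functional $n_{\atype_1 \ftypepijl \ldots \ftypepijl \atype_n \ftypepijl \abasetype}$, which is weakly monotonic by Example~\ref{exa:wmf}(\ref{ex:wmf:constants}). For $p = p_1 + p_2$ and $p = p_1 \cdot p_2$, I first record the (easy) fact that $+$ and $\cdot$ are weakly monotonic on $\N$; the induction hypothesis applied to $p_1$ and $p_2$ gives $p_k(\vec{N}) \geqwm p_k(\vec{M})$ for $k = 1,2$, and monotonicity of $+$ (respectively $\cdot$) on $\N$ then yields $p(\vec{N}) \geq p(\vec{M})$, which is $p(\vec{N}) \geqwm p(\vec{M})$ since the target is the base type $\iota$.

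The variable case $p = \avar_i(p_1,\ldots,p_m)$ carries the real work. Fix $\vec{N} \geqwm \vec{M}$; I must show $N_i(p_1(\vec{N}),\ldots,p_m(\vec{N})) \geqwm M_i(p_1(\vec{M}),\ldots,p_m(\vec{M}))$. I split this into two chained inequalities. Since $N_i \geqwm M_i$, by the definition of $\geqwm$ on the function space we have $N_i(\vec{a}) \geqwm M_i(\vec{a})$ for every argument tuple $\vec{a}$; and since $M_i$ is itself weakly monotonic, it then suffices to show that each argument $\geqwm$-decreases, i.e.\ $p_j(\vec{N}) \geqwm p_j(\vec{M})$, and that these arguments are genuine weakly monotonic functionals (so that applying $N_i$ and $M_i$ to them is legitimate). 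Both sub-facts follow from the induction hypothesis via the observation from the original sketch: when $p_j = \fatlambda \vec{\bvar}.q_j \in \Pol^{\btype_j}(X)$ with $\btype_j$ functional, unfolding the definition of $\Pol^{\btype_j}$ shows $q_j \in \Pol(X \cup \{\vec{\bvar}\})$, so the induction hypothesis applies to $\fatlambda \vec{\avar}\,\vec{\bvar}.q_j$ over the enlarged environment. Fixing $\vec{\avar}:=\vec{N}$ (using $\vec{N} \geqwm \vec{N}$ by reflexivity of $\geqwm$) and varying $\vec{\bvar}$ shows $p_j(\vec{N})$ is weakly monotonic; fixing $\vec{\bvar}$ and moving $\vec{\avar}$ from $\vec{N}$ to $\vec{M}$ gives $p_j(\vec{N}) \geqwm p_j(\vec{M})$. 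For base-type $p_j$ the same two conclusions follow directly from the induction hypothesis on the smaller polynomial $p_j$ (and, for well-monotonicity of the interpreted bodies, from Lemma~\ref{lem:algintfacts}(\ref{lem:algintfacts:interprete})). Chaining, $N_i(p_1(\vec{N}),\ldots) \geqwm M_i(p_1(\vec{N}),\ldots) \geqwm M_i(p_1(\vec{M}),\ldots)$, as required.

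I expect the main obstacle to be the bookkeeping in the variable case: one must simultaneously track that each functional argument $p_j$ is weakly monotonic as a standalone functional \emph{and} that it is monotone in the abstracted variables $\vec{\avar}$, while keeping the two separate appeals to weak monotonicity — of $N_i$ against $M_i$, and of $M_i$ across its arguments — cleanly apart. Setting up the induction so that $X$ ranges over arbitrary typed environments, so that the hypothesis covers $\fatlambda \vec{\avar}\,\vec{\bvar}.q_j$ over $X \cup \{\vec{\bvar}\}$, is the device that makes this go through; without it the bodies of the functional arguments would lie outside the scope of the induction hypothesis.
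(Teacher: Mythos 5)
Your proof is correct, and its induction skeleton coincides with the paper's: both induct on the size of $p$ with the type environment left variable, and both handle the variable case by unfolding $\Pol^{\btype_j}$ so that the induction hypothesis applies to the body $q_j \in \Pol(X \cup \{\vec{\bvar}\})$ of each functional argument over the enlarged environment. Where you genuinely differ is the engine used inside each inductive case. The paper encodes every construction ($p_1+p_2$, $p_1 \cdot p_2$, $\avar_i(p_1,\ldots,p_m)$) as the interpretation of a simply-typed $\lambda$-term under a valuation that sends fresh variables to the functionals supplied by the induction hypothesis, and then invokes Lemma~\ref{lem:algintfacts}(\ref{lem:algintfacts:interprete}) to conclude weak monotonicity; all the currying and argument-chaining bookkeeping is absorbed into van de Pol's result. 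You instead verify weak monotonicity by hand from the pointwise characterisation stated above Lemma~\ref{lem:gwminteract}, chaining $N_i(p_1(\vec{N}),\ldots,p_m(\vec{N})) \geqwm M_i(p_1(\vec{N}),\ldots,p_m(\vec{N})) \geqwm M_i(p_1(\vec{M}),\ldots,p_m(\vec{M}))$, which obliges you to establish separately that each $p_j(\vec{N})$ lies in $\WM_{\btype_j}$ (so the applications are legitimate) and that $p_j(\vec{N}) \geqwm p_j(\vec{M})$ --- both of which you correctly extract from the induction hypothesis by partial application and reflexivity of $\geqwm$ (Lemma~\ref{lem:gwminteract}). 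Your route is more elementary and self-contained, needing only the definition of $\WM$ and Lemma~\ref{lem:gwminteract}; the paper's buys brevity, since Lemma~\ref{lem:algintfacts}(\ref{lem:algintfacts:interprete}) performs your two-step chaining once and for all. One small point worth making explicit in your write-up: your argument identifies the partial application $(\fatlambda \vec{\avar}\,\vec{\bvar}.q_j)(\vec{N})$ with the functional $p_j(\vec{N})$, which is justified by the extensional reading of functionals (functions as sets of pairs) that the paper also relies on.
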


\begin{proof}
First note (**): \emph{$\fatlambda n m.n+m$ and $\fatlambda n m.n
\cdot m$ are weakly monotonic functionals in $\WM_{\abasetype_1
\ftypepijl \abasetype_2 \typepijl \abasetype_3}$} for any three
base types $\abasetype_1,\abasetype_2,\abasetype_3$.  This is easy
to see (take into account that $\WM_{\abasetype_1 \ftypepijl
\abasetype_2 \typepijl \abasetype_3}$ just consists of those
functions $f$ in the function space $\basealgebraset \functionpijl
\basealgebraset \functionpijl \basealgebraset$ such that $f(a,b)
\geq f(a',b')$ if $a \geq a'$ and $b \geq b'$).

The lemma holds by induction on the derivation of $p \in
\Pol(\{ \vec{\avar} \})$.

If $p \in \N$, then $\fatlambda \vec{\avar}.p$ is a
constant function; its weak monotonicity was demonstrated in
Example~\ref{exa:wmf}(\ref{ex:wmf:constants}).

If $p_1,p_2 \in \Pol(\{\vec{\avar}\})$, then by the induction
hypothesis $\fatlambda \vec{\avar}.p_1$ and $\fatlambda \vec{\avar}.
p_2$ are both weakly monotonic functionals.
Consider the $\lambda$-term $L := \abs{\bvar_1 \ldots \bvar_n}{
\app{\app{A}{(\app{F_1}{\vec{\bvar}})}}{(\app{F_2}{\vec{\bvar}})}}$,
and let $\varvaluation = \{ A \mapsto \fatlambda n m.n+m,\ F_1
\mapsto \fatlambda \vec{\avar}.p_1,\ F_2 \mapsto \fatlambda
\vec{\avar}.p_2\}$.
By Lemma~\ref{lem:algintfacts}(\ref{lem:algintfacts:interprete}),
$\lamalgint{L}_{\varvaluation} = \fatlambda \vec{\avar}.p_1 + p_2$ is
a weakly monotonic functional.
In the same way (using a valuation with $A \mapsto \fatlambda n m.
n \cdot m$), $\fatlambda \vec{\avar}.p_1 \cdot p_2$ is a weakly
monotonic functional.

Finally, suppose $p_1 \in \Pol^{\btype_1}(\vec{\avar}), \ldots, p_m
\in \Pol^{\btype_m}(\vec{\avar})$, and $\avar_i$ has type
$\btype_1 \typepijl \ldots \typepijl \btype_m \typepijl \abasetype$.
Since we can write $p_i = \fatlambda \vec{\bvar}.p_i'$ with
$p_i' \in \Pol^{\btype_i}(\vec{\avar},\vec{\bvar})$, the induction
hypothesis tells us that each $\fatlambda \vec{\avar}.p_i \in
\WM_{\vec{\atype} \ftypepijl \vec{\btype} \ftypepijl \abasetype}$.
Thus, $\fatlambda \vec{\avar}.\avar_i(\vec{p}) =
\lamalgint{\abs{\vec{\avar}}{\app{\avar_i}{(\app{\cvar_1}{\vec{\avar}})}
\cdots (\app{\cvar_m}{\vec{\avar}})}}_{\{\cvar_1 \mapsto p_1, \ldots,
\cvar_m \mapsto p_m \}}$ is a weakly monotonic functional by
Lemma~\ref{lem:algintfacts}(\ref{lem:algintfacts:interprete}).
\end{proof}

\begin{lemma}{Lemma~\ref{lem:strongpolynomial}}
Let $P(\avar_1,\ldots,\avar_n)$ be a higher-order polynomial of the
form $p_1(\vec{\avar}) + \ldots + p_m(\vec{\avar})$, where all $p_i(
\vec{\avar})$ are higher-order monomials.  Then $\fatlambda \vec{
\avar}.P(\vec{\avar})$ is strongly monotonic in argument $i$ if there
is some $p_j$ of the form $a \cdot
\avar_i(\vec{b}(\vec{\avar}))$, where $a \in \N^+$.
\end{lemma}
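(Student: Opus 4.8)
The plan is to unfold the definition of strong monotonicity in argument $i$ directly. I fix weakly monotonic functionals for all the arguments and suppose the $i$-th one strictly decreases: concretely, write $\vec{\avar} = \avar_1,\ldots,\avar_i,\ldots,\avar_n$ and $\vec{\avar'} = \avar_1,\ldots,\avar_i',\ldots,\avar_n$ for two tuples that agree everywhere except in slot $i$, where I assume $\avar_i \gwm \avar_i'$. Since $\avar_i \gwm \avar_i'$ implies $\avar_i \geqwm \avar_i'$ (because $>$ is contained in $\geq$), I also have the weak inequality available. The goal is then $P(\vec{\avar}) \gwm P(\vec{\avar'})$.

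First I would dispose of the ordinary monomials. By Lemma~\ref{lem:polynomial:weak}, each $\fatlambda \vec{\avar}.p_k$ is a weakly monotonic functional, so by the characterisation of weak monotonicity stated above Lemma~\ref{lem:gwminteract} (each argument weakly decreasing forces the result to weakly decrease), together with the fact that $\vec{\avar}$ and $\vec{\avar'}$ differ only in slot $i$ where $\avar_i \geqwm \avar_i'$, I obtain $p_k(\vec{\avar}) \geqwm p_k(\vec{\avar'})$ for every $k$.

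The hard part will be the distinguished monomial $p_j = a \cdot \avar_i(\vec{b}(\vec{\avar}))$ with $a \in \N^+$, where the subtlety is that passing from $\vec{\avar}$ to $\vec{\avar'}$ changes both the head $\avar_i$ and, potentially, the argument tuple $\vec{b}(\vec{\avar})$, which may itself mention $\avar_i$. I would separate these two effects rather than varying them simultaneously. Each argument $b_l$ is weakly monotonic in $\vec{\avar}$ by Lemma~\ref{lem:polynomial:weak} (and its functional analogue for non-base arguments), so $\vec{b}(\vec{\avar}) \geqwm \vec{b}(\vec{\avar'})$ componentwise; weak monotonicity of the head then gives $\avar_i(\vec{b}(\vec{\avar})) \geqwm \avar_i(\vec{b}(\vec{\avar'}))$. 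Separately, the definition of $\gwm$ on functional types, applied at the fixed point $\vec{b}(\vec{\avar'})$, yields $\avar_i(\vec{b}(\vec{\avar'})) \gwm \avar_i'(\vec{b}(\vec{\avar'}))$. Splicing these through the intermediate value $\avar_i(\vec{b}(\vec{\avar'}))$ by compatibility of $\geqwm$ and $\gwm$ (Lemma~\ref{lem:gwminteract}) produces $\avar_i(\vec{b}(\vec{\avar})) \gwm \avar_i'(\vec{b}(\vec{\avar'}))$. As these are base-type values, i.e.\ natural numbers, and $a \geq 1$, multiplication by $a$ preserves the strict inequality, so $p_j(\vec{\avar}) \gwm p_j(\vec{\avar'})$.

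Finally I would assemble the summands: one monomial strictly decreases while all the others weakly decrease, and since $+$ is strongly monotonic, the whole sum strictly decreases, i.e.\ $P(\vec{\avar}) \gwm P(\vec{\avar'})$, which is exactly strong monotonicity in argument $i$. The only genuinely delicate point is the two-effect split in the crux step, where I must be careful to route the inequality through the intermediate value $\avar_i(\vec{b}(\vec{\avar'}))$ and close the gap by compatibility, rather than changing head and argument at once.
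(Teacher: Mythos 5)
Your proposal is correct and follows essentially the same route as the paper's own proof: dispose of the remaining monomials by weak monotonicity (Lemma~\ref{lem:polynomial:weak}), then split the decrease of the distinguished monomial $a \cdot \avar_i(\vec{b}(\vec{\avar}))$ into a weak step (arguments change, head fixed) followed by a strict step (head changes, arguments fixed at $\vec{b}(\vec{\avar'})$), chain the two through the same intermediate value, and finish with $a \geq 1$ and strong monotonicity of $+$. The paper formalises the two-effect split by writing $p_j = a \cdot d(\avar_i,\vec{\avar})$ with the head occurrence abstracted out, but this is only a notational difference from your argument.
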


\begin{proof}
Let weakly monotonic functionals $N_1,\ldots,N_n,M_i$ be given, and
some $i,j$ such that $p_j = a \cdot \avar_i(\vec{b}(\vec{\avar}))$
with $a \in \N^+$.  It suffices to see that if $N_i \gwm M_i$, then
also $P(N_1,\ldots,N_i,\ldots,N_n) > P(N_1,\ldots,M_i,\ldots,N_n)$.
In the following, $\vec{N}$ is short notation for $N_1,\ldots,N_i,
\ldots,N_n$ and $\vec{N'}$ is short notation for $N_1,\ldots,M_i,
\ldots,N_n$.

By Lemma~\ref{lem:polynomial:weak}, $\fatlambda \vec{\avar}.p_k(
\vec{\avar})$ is a weakly monotonic functional for all $k$, and this
implies that $p_k(\vec{N}) \geq p_k(\vec{N'})$.  If, moreover,
$p_j(\vec{N}) > p_j(\vec{N'})$, then we obtain $P(\vec{N}) >
P(\vec{N'})$, as required, by the nature of the addition operator.

Write $p_j(\vec{\avar}) = a \cdot d(\avar_i,\vec{\avar})$, where
$d(\bvar,\vec{\avar}) = \bvar(\fatlambda \vec{\bvar_1}.b_1(
\vec{\avar},\vec{\bvar}), \ldots, \fatlambda \vec{\bvar_m}.
b_m(\vec{\avar},\vec{\bvar}))$.
Since all $b_j$ are polynomials, Lemma~\ref{lem:polynomial:weak}
provides that $\fatlambda \vec{\bvar_j}.b_j(\vec{N},\vec{\bvar}) \geq
\fatlambda \vec{\bvar_j}.b_j(\vec{N'},\vec{\bvar})$, so by weak
monotonicity of $N_i$ we know $d(N_i,\vec{N}) \geq d(N_i,\vec{N'})$.
By the definition of $N_i \gwm M_i$, we also see that
$d(N_i,\vec{N'}) > d(M_i,\vec{N'})$.
Since for $a > 0$ we have $a \cdot k > a \cdot j$ if $k >
j$, it follows that $p_j(\vec{N}) = a \cdot d(N_i,\vec{N}) > a \cdot
d(M_i,\vec{N'}) = p_j(\vec{N'})$ as required.
\end{proof}

} % \confreport
\end{document}